\documentclass{article}
\usepackage{amssymb, amsmath, amsthm}
\usepackage{cite}

\newtheorem{theorem}{Theorem}[section]
\newtheorem{lemma}[theorem]{Lemma}
\newtheorem{proposition}[theorem]{Proposition}
\theoremstyle{remark}
\newtheorem{remark}[theorem]{Remark}
\numberwithin{equation}{section}

\allowdisplaybreaks[3]

\makeatletter
\renewcommand{\@fnsymbol}[1]{\ensuremath{%
   \ifcase#1\or * \or 1\or 2\or 3\or
   \mathsection\or \mathparagraph\or \|\or \star\or
   \star\star\or {\star\star}\star \else\@ctrerr\fi}}
\makeatother

\title{Glauber Dynamics in Continuum: A Constructive Approach to Evolution of States
\thanks{This work was financially supported by the DFG through SFB 701:
``Spektrale Strukturen und Topologische Methoden in der Mathematik"
and through the research projects 436 POL 125/0-1, 436 UKR 113/97,
436 UKR 113/98, which is cordially acknowledged by the authors.}}
\author{Dmitri Finkelshtein%
\thanks{Institute of Mathematics, National Academy of Science of
Ukraine, 01601 Kiev-4, Ukraine; e-mail: \texttt{fdl@imath.kiev.ua}}\and Yuri Kondratiev%
\thanks{Fakult\"at f\"ur Mathematik, Universit\"at Bielefeld, Postfach
110 131, 33501 Bielefeld, Germany;
e-mail: \texttt{kondrat@math.uni-bielefeld.de}}\and Yuri Kozitsky%
\thanks{Instytut Matematyki, Uniwersytet Marii Curie-Sk{\l}odwskiej,
30-031 Lublin, Poland; e-mail:
\texttt{jkozi@hektor.umcs.lublin.pl}}}
\begin{document}
\maketitle

\begin{abstract}
The evolutions of states is described corresponding to the Glauber
dynamics of an infinite system of interacting particles in
continuum. The description is conducted on both micro- and
mesoscopic levels. The microscopic description is based on solving
linear equations for correlation functions by means of an
Ovsjannikov-type technique, which yields the evolution in a scale of
Banach spaces. The mesoscopic description is performed by means of
the Vlasov scaling, which yields a linear infinite chain of
equations obtained from those for the correlation function. Its main
peculiarity is that, for the initial correlation function of the
inhomogeneous Poisson measure, the solution is the correlation
function of such a measure with density which solves a nonlinear
differential equation of convolution type.
\end{abstract}

\noindent {\it Key words:} Glauber dynamics, birth-and-death
process, point process, configuration space, correlation function,
scale of Banach spaces, Ovsjannikov method, Vlasov scaling, scaling
limit, nonlocal equation

\vskip.2cm

\noindent {\it MSC (2010):} 82C22; 60K35; 35Q83.

\section{Introduction}

In the statistical theory of large systems~\cite{Dob}, the system
states are described as probability measures on the corresponding
phase space rather than pointwise, which is typical for the standard
theory of dynamical systems. For such large systems, in order to
obtain the description independent of the system size one employs
the models where the system is infinite and distributed over a
noncompact manifold with positive density. A particular case
constitute models of interacting point particles distributed over
$\mathbb{R}^d$, which are widely used in mathematical physics,
ecology, sociology, etc,
see~\cite{Nancy,Dima,Dima0,Dima1,Dima2,Oles1,Oles,OlesLena}. Here
the states are probability measures on the space of the particle
configurations
\begin{equation} \label{C1}
 \Gamma\equiv\Gamma(\mathbb R^d):=\{\gamma\subset\mathbb R^d :
 |\gamma\cap K|<\infty\text{ for any compact $K\subset\mathbb R^d$
 }\},
\end{equation}
where $|A|$ denotes the cardinality of $A$. The system is characterized by a collection of appropriate functions
$F:\Gamma \rightarrow \mathbb{R}$, called
{\it observables}. For a state
$\mu$, the quantity
\begin{equation*}
 \langle \! \langle F, \mu \rangle \! \rangle = \int_{\Gamma} F (\gamma) \mu(d \gamma)
\end{equation*}
is called the mean value of observable $F$ in state $\mu$. Then the system evolution
is described as the evolution of observables
obtained from the Kolmogorov equation
\begin{equation}
 \label{R2}
\frac{d}{dt} F_t = L F_t , \qquad F_t|_{t=0} = F_0, \qquad t>0,
\end{equation}
where the `generator' $L$ is specified within the choice of the model.
The evolution of states
is obtained from the
Fokker--Planck equation
\begin{equation}
 \label{R1}
\frac{d}{dt} \mu_t = L^* \mu_t, \qquad \mu_t|_{t=0} = \mu_0,
\end{equation}
related to (\ref{R2}) by the duality
\begin{equation}
 \label{R}
\langle \! \langle F_0, \mu_t \rangle \! \rangle = \langle \!
\langle F_t, \mu_0 \rangle \! \rangle.
\end{equation}
Note that $L$ ought to be Markov in order that the solutions of
(\ref{R1}) be probability measures. One of the possibilities here is
to describe the evolution pathwise---by constructing a stochastic
Markov process $X^{\mu_0}_t$, corresponding to the `generator' $L$
and to the initial state $\mu_0$. Then the state $\mu_t$ is just the
distribution law of $X^{\mu_0}_t$. However, for a number of
important models this way encounters serious problems and hence is
rather unrealistic. Moreover, the mere existence of the process
tells not too much about the properties of the system evolution.
Thus, the main idea which we realize in this work is to describe the
system evolution as the Markov evolution of states $\mu_0 \mapsto
\mu_t$, not necessarily based on the pathwise description, and
accompanied with a more detailed study of its properties. In a
sense, our approach is suggested by
 classical works on the Hamiltonian dynamics where
 the system evolution is described as the evolution
of the corresponding correlation functions obtained by solving the
equation
\begin{equation}
 \label{R4}
\frac{d}{dt} k_t = L^\Delta k_t, \qquad k_t|_{t=0} = k_0.
\end{equation}
In the Hamiltonian dynamics, the analog of (\ref{R4}) is the BBGKY
hierarchy. As mentioned in~\cite{Dob}, kinetic equations of the
Hamiltonian dynamics
 allow one to describe the evolution approximately but in more detail and in
simpler terms. Such kinetic equations can be obtained from equations like (\ref{R4}),
 provided
all necessary information about their solutions is available, see
the corresponding discussion in~\cite[paragraph 6]{Dob}.

In the present article, we describe the Markov evolution of states
in terms of the correlation functions on both microscopic, obtained
from (\ref{R4}), and mesoscopic levels. The latter will be done by
means of a nonlinear (kinetic) equation obtained from (\ref{R4}) in
the Vlasov scaling limit. As in~\cite{Dima0,OlesLena}, our object is
the Glauber dynamics described by the `generator' (\ref{R2}) having
the form
\begin{align} \label{R20}
(L F)(\gamma) = & \sum_{x\in \gamma} \left[F(\gamma\setminus x) -
F(\gamma) \right]\\ & +  \varkappa \int_{\mathbb{R}^d}
 \exp\left(- \sum_{y\in \gamma} \phi(x-y) \right) \left[F(\gamma\cup x) - F(\gamma) \right] dx .\nonumber
\end{align}
Here the first term describes the particle death with constant rate,
whereas the second one is the birth term with activity $\varkappa>0$
and an interaction potential $\phi\geq 0$, which is supposed to obey
a natural integrability condition only. In contrast
to~\cite{Dima0,OlesLena}, where $\varkappa$ and $\phi$ were subject
to a certain constraint, here we obtain the evolution $k_0\mapsto
k_t$ for all $\varkappa$ and $\phi$, which, however, is restricted
to a limited time interval $[0, T_*)$. Instead of the semigroup
techniques used in~\cite{Dima0,OlesLena}, we apply Picard-like
approximations and a method suggested in~\cite[pp. 94, 95]{Ge},
which allows for constructing classical solutions in a scale of
Banach spaces\footnote{Further developments are known under the name
{\it Ovsjannikov's method}, see e.g.~\cite{trev}.}
$\{\mathcal{K}_\alpha\}_{\alpha \in \mathcal{I}\subset \mathbb{R}}$,
$\mathcal{K}_{\alpha'} \subset \mathcal{K}_{\alpha''}$ for $\alpha''
< \alpha'$. Namely, in Theorem~\ref{otm} we show that, for any
$\alpha_0\in \mathbb{R}$ and any $\alpha< \alpha_0$, there exists
$T(\alpha_0, \alpha)>0$ such that, for any $t\in [0,T(\alpha_0,
\alpha))$, there exists $\alpha_t \in (\alpha, \alpha_0)$ such that
the problem (\ref{R4}) with $k_0 \in \mathcal{K}_{\alpha_0}$ has a
classical solution $k_t \in \mathcal{K}_{\alpha_t}$ being the
correlation function of a certain $\mu_t$. The latter fact is
obtained by means of the corresponding result of~\cite{Nancy}. This
yields the evolution $\mu_0 \mapsto \mu_t$. In addition, in
Theorem~\ref{2tm} we show that, for $k_0 (\eta) \leq
\varkappa^{|\eta|}$, the solution obeys $k_t(\eta) \leq
\varkappa^{|\eta|}$ and hence can be continued in time to the whole
$\mathbb{R}_{+}$. These are the main results of Section 3. In
Section 4, we perform the Vlasov scaling and obtain the Vlasov
hierarchy---a linear evolution equation $(d/dt) r_t = L_V r_t$,
which we study in the same scale of Banach spaces where the
correlation functions evolve. Its main peculiarity is the fact that
if $r_0$ is the correlation function of a nonhomogeneous Poisson
measure $\pi_{\varrho_0}$ with density $\varrho_0$, then the
solution $r_t$ is the correlation function for $\pi_{\varrho_t}$
with $\varrho_t$ satisfying a nonlinear nonlocal equation, see
Lemma~\ref{Vlm} and Theorem~\ref{Vatm}. Finally, in
Theorem~\ref{3tm} we show that the rescaled correlation functions
converge in the scaling limit to the corresponding $r_t$. In Section
5, we briefly summarize and compare with each other the results of
Sections 3 and 4.

\section{The basic notions and the model}

\subsection{The notions}
All the details of the framework used in this paper can be found
in~\cite{Dima,Dima0,Dima1,Dima2,Tobi,Oles}. We consider an infinite
system of point particles moving in $\mathbb{R}^d$, $d\geq 1$.
 By $\mathcal{B}(\mathbb{R}^d)$ and
 $\mathcal{B}_{\rm b}(\mathbb{R}^d)$ we denote the set of all
Borel and the set of all bounded Borel subsets of $\mathbb{R}^d$, respectively.
 For $X\in \mathcal{B}(\mathbb{R}^d)$, the set of $n$-particle configurations in $X$ is
 \[ \Gamma^{(0)}_X = \{ \emptyset\}, \qquad \Gamma^{(n)}_X = \{\eta \subset X: |\eta| = n \}, \ \ n\in \mathbb{N}.
 \]
 where $| \cdot|$ denotes cardinality.
$\Gamma^{(n)}_X$ can be identified with the symmetrization of $\{(x_1, \dots , x_n)\in X^n: x_i \neq x_j, \ {\rm for} \ j\neq j\}$, which allows one to introduce the corresponding topology and hence the Borel $\sigma$-algebra $\mathcal{B}(\Gamma^{(n)}_X)$.
The set of finite configurations in $X$ is
\[
\Gamma_{0,X} = \bigsqcup_{n\in \mathbb{N}_0} \Gamma^{(n)}_X .
\]
We equip it with the topology of the disjoint union and hence with
the Borel $\sigma$-algebra $\mathcal{B}(\Gamma_{0,X})$. For
$X=\mathbb{R}^d$, we write $\Gamma^{(n)} $ and $\Gamma_{0}$ meaning
$\Gamma^{(n)}_X $ and $\Gamma_{0,X}$, respectively. The restriction
of the Lebesgue product measure $dx_1 dx_2 \cdots dx_n $ to
$(\Gamma^{(n)}, \mathcal{B}(\Gamma^{(n)}))$ is denoted by $m^{(n)}$.
Then the Lebesgue-Poisson measure on $(\Gamma_0,
\mathcal{B}(\Gamma_0))$ is
\begin{equation}
 \label{1A}
 \lambda = \delta_{\emptyset} + \sum_{n=1}^\infty \frac{1}{n!} m^{(n)}.
\end{equation}
For any $X\in \mathcal{B}(\mathbb{R}^d)$, the restriction of $\lambda$ to $\Gamma_{0,X}$ will also be denoted by $\lambda$.

The set of all configurations in $\mathbb{R}^d$ is
\begin{equation}
 \label{2A}
 \Gamma = \{ \gamma \subset \mathbb{R}^d : |\gamma \cap \Lambda| < \infty \ \ {\rm for} \ {\rm all} \ \ \Lambda \in \mathcal{B}_{\rm b}(\mathbb{R}^d)\}.
\end{equation}
We equip it with the vague topology---the weakest topology in which
all the maps
\[
\Gamma \ni \gamma \mapsto \langle \gamma , f \rangle = \sum_{x\in \gamma} f(x) , \quad f\in C_0 (\mathbb{R}^d),
\]
are continuous. Here $C_0 (\mathbb{R}^d)$ stands for the set of all
continuous $f:\mathbb{R}^d \rightarrow \mathbb{R}$, which have
compact supports. The vague topology on $\Gamma$ admits a
metrization, which turns it into a complete and separable metric
(Polish) space, see e.g.~\cite{Oles0}. By $\mathcal{B}(\Gamma)$ we
denote the corresponding Borel $\sigma$-algebra. It turns out that
the measurable space $(\Gamma ,\mathcal{B}(\Gamma))$ is the
projective limit of the family $\{(\Gamma_{0,\Lambda}
,\mathcal{B}(\Gamma_{0,\Lambda}))\}_{\Lambda \in \mathcal{B}_{\rm
b}(\mathbb{R}^d)}$. Then the Poisson measure $\pi$ on $(\Gamma
,\mathcal{B}(\Gamma))$ is the projective limit of the family $\{
\pi^\Lambda\}_{\Lambda \in \mathcal{B}_{\rm b}(\mathbb{R}^d)}$,
where
\begin{equation}
 \label{3A}
 \pi^\Lambda = \exp ( - m(\Lambda)) \lambda,
\end{equation}
$m(\Lambda)$ being the Lebesgue measure of $\Lambda$. The Poisson
measure $\pi_{\varrho}$ corresponding to the density $\varrho:
\mathbb{R}\rightarrow \mathbb{R}_{+}$ is introduced by means of the
measure $\lambda_{\varrho}$, defined as in (\ref{1A}) with $m$
replaced by $m_\varrho$, where, for $\Lambda \in \mathcal{B}_{\rm
b}(\mathbb{R}^d)$,
\begin{equation}
 \label{3Aa}
m_{\varrho} (\Lambda) = \int_{\Lambda} \varrho (x) d x,
\end{equation}
which is supposed to be finite. Then $\pi_{\varrho}$ is defined by its projections
\begin{equation}
 \label{3Ab}
 \pi^\Lambda_{\varrho} = \exp ( - m_{\varrho}(\Lambda)) \lambda^\Lambda_{\varrho}.
\end{equation}
For a measurable $f:\mathbb{R}^d \rightarrow \mathbb{R}$ and $\eta \in \Gamma_0$, the Lebesgue-Poisson exponent is
\begin{equation}
 \label{4A}
 e(f, \eta) = \prod_{x\in \eta} f(x), \qquad e(f, \emptyset ) = 1.
\end{equation}
Clearly $ e(f, \cdot)\in L^1 (\Gamma_0, d \lambda)$ for any $f \in L^1 (\mathbb{R}^d)$, and
\begin{equation}
 \label{5A}
\int_{\Gamma_0} e(f, \eta) \lambda (d\eta) =
\exp\left\{\int_{\mathbb{R}^d} f(x) d x \right\}.
\end{equation}
A set $M\in \mathcal{B}(\Gamma_0)$ is said to be bounded if
\begin{equation}
 \label{6A}
 M \subset \bigsqcup_{n=0}^N \Gamma^{(n)}_\Lambda
\end{equation}
for some $\Lambda \in \mathcal{B}_{\rm b}(\mathbb{R}^d)$ and $N\in
\mathbb{N}$. By $B_{\rm bs} (\Gamma_0)$ we denote the set of all
bounded measurable functions $G: \Gamma_0 \rightarrow \mathbb{R}$,
which have bounded supports. That is, each such $G$ is the zero
function on $\Gamma_0 \setminus M$ for some bounded $M$. Noteworthy,
any measurable $G: \Gamma_0 \rightarrow \mathbb{R}$ is in fact a
sequence of measurable symmetric functions $G^{(n)} :
(\mathbb{R}^d)^n \rightarrow \mathbb{R}$.

For $\Lambda \in \mathcal{B}_{\rm b}(\mathbb{R}^d)$ and $\gamma\in
\Gamma$, by $\gamma_\Lambda$ we denote $\gamma \cap \Lambda$; thus,
$\gamma_\Lambda \in \Gamma_{0,\Lambda}$. A measurable function
$F:\Gamma \rightarrow \mathbb{R}$ is called a {\it cylinder}
function if there exist $\Lambda \in \mathcal{B}_{\rm
b}(\mathbb{R}^d)$ and a measurable $G: \Gamma_{0,\Lambda}\rightarrow
\mathbb{R}$ such that $F(\gamma) = G(\gamma_\Lambda)$ for all
$\gamma \in \Gamma$. By $\mathcal{F}_{\rm cyl}(\Gamma)$ we denote
the set of all cylinder functions. For $\gamma \in \Gamma$, by
writing $\eta \Subset \gamma$ we mean that $\eta \subset \gamma$ and
$\eta$ is finite, i.e., $\eta \in \Gamma_0$. For $G \in B_{\rm
bs}(\Gamma_0)$, we set
\begin{equation}
 \label{7A}
 (KG)(\gamma) = \sum_{\eta \Subset \gamma} G(\eta), \qquad \gamma \in \Gamma.
\end{equation}
It is known that $K$ is linear and positivity preserving, and maps
$B_{\rm bs}(\Gamma_0)$ into $\mathcal{F}_{\rm cyl}(\Gamma)$ (see
e.g. \cite{Tobi}).

By $\mathcal{M}^1_{\rm fm} (\Gamma)$ we denote the set of all probability measures on $(\Gamma, \mathcal{B}(\Gamma))$ which have finite local moments, that is, for which
\begin{equation}
 \label{8A}
 \int_\Gamma |\gamma_\Lambda|^n \mu(d \gamma) < \infty \quad \ \ \ {\rm for} \ \ {\rm all} \ \ n\in \mathbb{N} \ \ {\rm and}\ \ \Lambda \in \mathcal{B}_{\rm b} (\mathbb{R}^d).
\end{equation}
A measure $\rho$ on $(\Gamma_0, \mathcal{B}(\Gamma_0))$ is said to be {\it locally finite} if $\rho(M)< \infty$ for every bounded $M\subset \Gamma_0$. By $\mathcal{M}_{\rm lf} (\Gamma_0)$ we denote the set of all such measures.
For $\Lambda \in \mathcal{B}_{\rm b} (\mathbb{R}^d)$, by $p_\Lambda$ we denote the map $\Gamma \ni \gamma \mapsto p_\Lambda (\gamma) = \gamma_\Lambda$. Then, for $A\subset \Gamma_{0,\Lambda}$, we write $p^{-1}_\Lambda (A) = \{ \gamma \in \Gamma : p_\Lambda (\gamma ) \in A\}$.
A measure $\mu \in \mathcal{M}^1_{\rm fm} (\Gamma)$
is said to be {\it locally absolutely continuous} with respect to the Poisson measure $\pi$ if, for every
$\Lambda \in \mathcal{B}_{\rm b} (\mathbb{R}^d)$, $\mu^\Lambda := \mu \circ p_\Lambda^{-1}$ is absolutely continuous with respect to $\pi^\Lambda$, see (\ref{3A}).

Let $M\subset \Gamma_0$ be bounded, and let $\mathbb{I}_M$ be its
indicator function on $\Gamma_0$. Then $\mathbb{I}_M$ is in $B_{\rm
bs}(\Gamma_0)$ and hence one can apply (\ref{7A}). For $\mu \in
\mathcal{M}^1_{\rm fm} (\Gamma)$, let
\begin{equation}
 \label{9A}
 \rho_\mu (M) = \int_{\Gamma} (K\mathbb{I}_M) (\gamma) \mu(d \gamma),
\end{equation}
which uniquely determines a measure $\rho_\mu \in \mathcal{M}_{\rm
lf}(\Gamma_0))$. It is called the {\it correlation measure} for
$\mu$. This defines the map $K^*: \mathcal{M}^1_{\rm fm} (\Gamma)
\rightarrow \mathcal{M}_{\rm lf}(\Gamma_0))$ such that $K^*\mu =
\rho_\mu$. In particular, $K^*\pi = \lambda$. It is known that,
see~\cite[Proposition 4.14]{Tobi}, $\rho_\mu $ is absolutely
continuous with respect to $\lambda$ if $\mu$ is locally absolutely
continuous with respect to $\pi$. In this case, we have that for any
$\Lambda \in \mathcal{B}_{\rm b} (\mathbb{R}^d)$,
\begin{equation}
 \label{9AA}
 k_\mu (\eta) = \frac{d \rho_\mu}{d \lambda}(\eta) = \int_{\Gamma_{0,\Lambda}} \frac{d\mu^\Lambda}{d \pi^\Lambda} (\eta \cup \gamma) \pi^\Lambda (d \gamma).
\end{equation}
The Radon--Nikodym derivative $k_\mu$ is called the {\it correlation
function} corresponding to the measure $\mu$.

Finally, we mention the following integration rule, c.f.~\cite[Lemma
2.1]{Dima0},
\begin{equation}
 \label{12A}
\int_{\Gamma_0} \sum_{\xi \subset \eta} H(\xi, \eta \setminus \xi,
\eta) \lambda (d \eta) = \int_{\Gamma_0}\int_{\Gamma_0}H(\xi, \eta,
\eta\cup \xi) \lambda (d \xi) \lambda (d \eta),
\end{equation}
 which holds for any appropriate function $H$.

\subsection{The model}

Let $\phi: \mathbb{R}^d \rightarrow \mathbb{R}_{+} := [0, +\infty)$
be such that $\phi(x ) = \phi (-x)$ and be integrable in the
following sense
\begin{equation}
 \label{13A}
 c_\phi := \int_{\mathbb{R}^d} \left( 1 - e^{-\phi(x)}\right)d x < \infty.
\end{equation}
For $\gamma\in \Gamma$, we set
\begin{equation}
 \label{14A}
 E^\phi (x, \gamma) = \sum_{y\in \gamma} \phi(x-y) ,
\end{equation}
with the possibility that $E^\phi (x, \gamma) = +\infty$ for some
$\gamma$. In the model we consider, the dynamics of the observables
is defined by the `generator' (\ref{R20}) with $\phi$ just mentioned
and $\varkappa >0$ being the birth {\it activity} parameter. The
action of the `generator' (\ref{R20}) on $F\in \mathcal{F}_{\rm
cyl}(\Gamma)$ is well-defined. Indeed, for any
 $F\in \mathcal{F}_{\rm cyl}(\Gamma)$, one finds $\Lambda \in \mathcal{B}_{\rm b} (\mathbb{R}^d)$
such that $F(\gamma \setminus x) = F(\gamma \cup x) = F(\gamma)$ for any $x\in \Lambda^c : = \mathbb{R}^d \setminus \Lambda$.
Thus, the sum and the integral in (\ref{R20}) are finite.

Following the general scheme developed in~\cite{Oles1} one
constructs the evolution of the {\it quasi-observables}, which are
functions on $\Gamma_0$. This evolution is obtained as a solution to
the following Cauchy problem
\begin{equation}
 \label{16A}
\frac{d G_t}{dt} = \hat{L} G_t, \qquad G_t|_{t=0} = G_0,
\end{equation}
where $\hat{L} = K^{-1} L K$ is the so called {\it symbol} of $L$, which has the form
\begin{equation}
 \label{17A}
 (\hat{L}G) (\eta) = - |\eta| G(\eta) + \varkappa \sum_{\xi \subset \eta}\int_{\mathbb{R}^d} e(t_x , \eta\setminus \xi)
 e(\tau_x, \xi) G(\xi \cup x) d x ,
\end{equation}
where $e(t_x, \cdot)$ is defined in (\ref{4A}), and
\begin{equation}
 \label{18A}
 \tau_x (y) = e^{-\phi (x-y)} , \qquad t_x (y) = \tau_x(y) - 1.
\end{equation}
Clearly, the action of $\hat{L}$ on $G\in B_{\rm bs}(\Gamma_0)$ is well-defined. Its extension
to wider classes of $G$ will be done in a while.

For a measurable function $k:\Gamma_0 \rightarrow \mathbb{R}$ and
$G \in B_{\rm bs} (\Gamma_0)$, we define
\begin{equation}
 \label{19A}
 \langle \! \langle G,k\rangle \! \rangle = \int_{\Gamma_0} G(\eta) k(\eta)\lambda (d \eta) .
\end{equation}
This pairing can be extended to the corresponding classes of $G$ and $k$.
Then (\ref{16A}) and (\ref{19A}) lead to the following (dual) Cauchy problem
\begin{equation}
 \label{20A}
\frac{d k_t }{ d t} = L^\Delta k_t , \qquad k_t|_{t=0} = k_0.
\end{equation}
The action of $L^\Delta$ is obtained by means of (\ref{12A}) from
\[
\langle \! \langle \hat{L} G, k \rangle \! \rangle = \langle \!
\langle G, L^\Delta k \rangle \! \rangle,
\]
and from (\ref{19A}) and (\ref{17A}). It thus has the form
\begin{align}
 \label{21A}
 (L^\Delta k)(\eta)  = & - |\eta| k(\eta)\\
  &+  \varkappa \sum_{x \in \eta} e(\tau_x, \eta \setminus x) \int_{\Gamma_0} e(t_x , \xi)
  k(\eta \setminus x \cup \xi) \lambda
  (d\xi).\nonumber
\end{align}
Of course, the case of a special interest in (\ref{20A}) is where
$k_0$ is the correlation function of a certain $\mu_0\in
\mathcal{M}^1_{\rm fm} (\Gamma)$, see (\ref{9AA}). However, the mere
existence of the solution $k_t$ does not guarantee that this $k_t$
is a correlation function.

 In~\cite{Oles1,Oles,OlesLena},
the solution $G_0 \mapsto G_t$ of (\ref{16A}), for all $t\geq 0$ and
`small' $\varkappa$ and $c_\phi$, was obtained in a certain Banach
space by means of the construction of a $C_0$-semigroup based on
perturbation methods. Then the evolution of the correlation
functions $k_0 \mapsto k_t$ was obtained in the weak sense, in which
$k_t$ is defined by $k_0$ via the relation
\begin{equation}
\label{1D}
 \langle \! \langle G_0, k_t \rangle \! \rangle = \langle \! \langle G_t, k_0 \rangle \! \rangle.
\end{equation}
Regarding the problems (\ref{16A}) and (\ref{20A}), in the present article we realize the following program:
\begin{itemize}
 \item Show that (\ref{16A}) has a unique classical solution for all $\varkappa>0$ and $c_\phi$, which we do in Theorem~\ref{1tm}
for $t$ belonging to a bounded interval.
\item Show that the solution of (\ref{16A}) exists for all $t\geq 0$ if $\varkappa c_\phi < 1/e$,
which we do in Theorem~\ref{20tm}.
\item Show that (\ref{20A}) has a unique classical solution $k_t$ for all $\varkappa>0$ and $c_\phi$, being
the correlation function of a certain $\mu_t \in \mathcal{M}^1_{\rm
fm}(\Gamma)$, which yields the evolution of states $\mu_0 \mapsto
\mu_t$. We do this in Theorem~\ref{otm} for $t$ belonging to a
bounded interval.
\item Show that the solution of (\ref{16A}) exists for all $t\geq 0$ if $k_0 (\eta) \leq \varkappa^{|\eta|}$, which we do
in Theorem~\ref{2tm}.
\end{itemize}
These results give the microscopic evolution of states corresponding
to (\ref{R20}). A~similar program concerning the mesoscopic
evolution is formulated and realized in Section 4 below.

\section{The microscopic description}

\subsection{The evolution of quasi-observables}

First we study the problem (\ref{16A}), (\ref{17A}). For $\alpha \in\mathbb{R}$, we consider the Banach space
\begin{equation}
 \label{BS}
\mathcal{G}_\alpha = L^1 (\Gamma_0, e^{-\alpha |\cdot|}d \lambda),
\end{equation}
that is, $G\in \mathcal{G}_\alpha$ if
\begin{equation}
 \label{6}
\|G\|_\alpha \ \stackrel{\rm def}{=} \ \int_{\Gamma_0} \exp(-\alpha |\eta|) \left\vert G(\eta) \right\vert \lambda (d \eta) < \infty.
\end{equation}
We will seek the solution of (\ref{16A}), (\ref{17A})
as the limit of $\{G^{(n)}_t\}_{n\in \mathbb{N}_0 }\subset \mathcal{G}_\alpha$, where $G^{(0)}_t = G_0$ and
\begin{equation}
 \label{8}
G^{(n)}_t = G_0 + \int_0^t \hat{L} G^{(n-1)}_s d s, \quad n \in \mathbb{N}.
\end{equation}
The latter can be iterated to give
\begin{equation}
 \label{8C}
G^{(n)}_t = G_0 + \sum_{m=1}^n \frac{1}{m!}t^m
\hat{L}^m G_0.
\end{equation}
We have $\tau_x(y) \leq 1$ since $\phi \geq 0$, see (\ref{18A});
hence, from (\ref{17A})
\begin{align*}
|\hat{L} G (\eta)| & \leq  |\eta| |G(\eta)| + \varkappa \sum_{\xi
\subset \eta}
 \int_{\mathbb{R}^d} e(|t_x|, \eta \setminus \xi) |G(\xi \cup x) | d x \\
  &:= H_1 (\eta) + H_2 (\eta).
\end{align*}
For any $\alpha'$, $\alpha''$ such that $ \alpha' < \alpha'' $, we
have
\begin{align}
 \label{9y}
 \|H_1 \|_{\alpha''} & =  \int_{\Gamma_0} |\eta| \exp\left(- (\alpha'' - \alpha')|\eta| \right) |G(\eta)| \exp\left(- \alpha' |\eta|\right) \lambda (d\eta)
 \\
& \leq \frac{\|G\|_{\alpha'}}{(\alpha'' - \alpha')e}, \nonumber
\end{align}
where we have used the following obvious estimate
\[
 |\xi| \exp(- (\alpha'' - \alpha') |\xi|) \leq \frac{1}{ (\alpha'' - \alpha')e}.
\]
Furthermore,
\begin{align*}
 \|H_2 \|_{\alpha''} &= \varkappa \int_{\Gamma_0} \sum_{\xi \subset \eta}\int_{\mathbb{R}^d} e(|t_x|; \eta \setminus
 \xi)
\exp(- \alpha'' |\eta \setminus \xi| - \alpha'' |\xi|)|G(\xi \cup x)
| d x \lambda (d\eta)\\& = \varkappa \int_{\Gamma_0}\int_{\Gamma_0}
\int_{\mathbb{R}^d} e(|t_x|; \eta) \exp(- \alpha'' |\eta | -
\alpha'' |\xi|)|G(\xi \cup x) | d x \lambda (d\xi) \lambda (d\eta)
\\& \leq \varkappa
\int_{\mathbb{R}^d} \int_{\Gamma_0}\exp( - \alpha'' |\xi|) |G(\xi
\cup x) |d x \lambda (d \xi) \\& \qquad \times \sup_{y\in\mathbb{R}}
\left\{ \int_{\Gamma_0} e(|t_y|; \eta)\exp(- \alpha'' |\eta |)
\lambda (d \eta) \right\} \\& = \varkappa \exp\left(c_\phi
e^{-\alpha''} \right) \int_{\Gamma_0} \sum_{x\in \xi} \exp\left( -
\alpha'' |\xi\setminus x|\right) |G(\xi)| \lambda ( d\xi)\\
&=\varkappa \exp\left(\alpha'' + c_\phi e^{-\alpha''} \right)
\int_{\Gamma_0} | \xi| \exp\left( - \alpha'' |\xi\setminus x|\right)
|G(\xi)| \lambda ( d\xi)\\& = \varkappa \exp\left(\alpha'' + c_\phi
e^{-\alpha''} \right) \frac{\|G\|_{\alpha'}}{(\alpha'' - \alpha')e},
\end{align*}
where we have used also (\ref{5A}) and (\ref{12A}). By means of the
latter estimate and (\ref{9y}) we finally get
\begin{equation}
 \label{9}
 \|\hat{L}G\|_{\alpha''} \leq \frac{\|G\|_{\alpha'}}{(\alpha'' - \alpha')e} \left[1 +
 \varkappa \exp\left(\alpha''+ c_\phi e^{-\alpha''} \right) \right].
\end{equation}
From (\ref{9}) we see that $\hat{L}$ can be defined as a bounded linear operator
$\hat{L}: \mathcal{G}_{\alpha'} \rightarrow
\mathcal{G}_{\alpha''}$ with the norm
\begin{equation}
 \label{10}
\|\hat{L}\|_{\alpha' \alpha''} \leq \frac{ 1}{(\alpha'' - \alpha')e}
\left[1 + \varkappa \exp\left(\alpha'' + c_\phi e^{-\alpha''}
\right) \right].
\end{equation}
Given $\alpha_0\in \mathbb{R}$, $\alpha> \alpha_0$, $m\in
\mathbb{N}$, and $l = 0, \dots , m$, we take $\alpha_l = \alpha_0 +
l \epsilon$, $\epsilon = (\alpha - \alpha_0)/m$. Then by (\ref{10})
we get
\begin{align}
 \label{10C}
 \| \hat{L}^m\|_{ \alpha_0 \alpha} & \leq \|\hat{L}\|_{\alpha_0\alpha_1} \cdots \|\hat{L}\|_{\alpha_{m-1}\alpha}
 \leq (m M)^m, \\
 \intertext{where} M &= \frac{1}{(\alpha - \alpha_0)e} \left[1 + \varkappa \exp\left(\alpha+ c_\phi e^{-\alpha_0} \right) \right]. \nonumber
\\\intertext{Put}
 \label{7}
T(\alpha, \alpha_0) &= \frac{\alpha - \alpha_0}{1 + \varkappa
\exp\left(\alpha + c_\phi e^{-\alpha_0} \right)}.
\end{align}
Note that
\begin{align}
\label{7R} T(\alpha, \alpha_0) & <  \frac{1}{\varkappa} \exp\left(
\log (\alpha-\alpha_0) - (\alpha - \alpha_0) - \alpha_0 - c_\phi
e^{-\alpha_0} \right)\\& \leq \frac{1}{\varkappa} \exp\left( -1 -
\log c_\phi - 1 \right) = \frac{1}{ e^2 \varkappa c_\phi}. \nonumber
\end{align}
Hence, we can set
\begin{equation}
 \label{7Q}
T_* {:=} \sup_{\alpha \in \mathbb{R}} \sup_{\alpha_0<
\alpha}T(\alpha, \alpha_0)<\infty.
\end{equation}
Our main result concerning the problem (\ref{16A}), (\ref{17A}) is
the following statement.
\begin{theorem}
 \label{1tm}
Let $\alpha_0$ and $\alpha$ be any real numbers such that $\alpha_0 < \alpha$.
Then the problem (\ref{16A}), (\ref{17A}) with $G_0 \in \mathcal{G}_{\alpha_0}$
has a unique classical solution $G_t \in \mathcal{G}_\alpha$ on the time interval $t \in [0, T(\alpha, \alpha_0))$.
\end{theorem}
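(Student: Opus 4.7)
The plan is to solve (\ref{16A}) by means of the Picard-type series (\ref{8C}) interpreted in the scale $\{\mathcal{G}_\alpha\}_{\alpha\in\mathbb{R}}$, and to read off its properties from the telescoped operator-norm bound (\ref{10C}), $\|\hat{L}^m\|_{\alpha_0\alpha}\le(mM)^m$. This estimate is the core of the Ovsjannikov mechanism: a single application of $\hat{L}$ between adjacent spaces costs a factor $1/(\alpha''-\alpha')$ via (\ref{10}), and $m$ applications are distributed evenly along the interval $[\alpha_0,\alpha]$ of length $\alpha-\alpha_0$.

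For existence I would combine (\ref{10C}) with the elementary Stirling-type inequality $m^m\le m!\,e^m$ to obtain
\[
\sum_{m=0}^{\infty}\frac{t^m}{m!}\,\|\hat{L}^m G_0\|_\alpha \;\le\; \|G_0\|_{\alpha_0}\sum_{m=0}^{\infty}(etM)^m.
\]
A direct computation with the definition of $M$ in (\ref{10C}) gives $1/(eM)=T(\alpha,\alpha_0)$ exactly, so the series converges absolutely in $\mathcal{G}_\alpha$ for every $t\in[0,T(\alpha,\alpha_0))$; its sum $G_t$ is then the $\mathcal{G}_\alpha$-limit of the iterates $G_t^{(n)}$ defined in (\ref{8C}).

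The next step is to upgrade $G_t$ to a \emph{classical} solution, and this is where the scale structure becomes indispensable. For any fixed $t_0<T(\alpha,\alpha_0)$ the continuity of $\alpha'\mapsto T(\alpha',\alpha_0)$ at $\alpha'=\alpha$ allows me to pick an intermediate $\alpha'\in(\alpha_0,\alpha)$ with $t_0<T(\alpha',\alpha_0)$. Repeating the existence argument at the level $\alpha'$ places $G_t$ inside the smaller space $\mathcal{G}_{\alpha'}$ for $t$ in a neighbourhood of $t_0$; since (\ref{10}) gives a bounded operator $\hat{L}:\mathcal{G}_{\alpha'}\to\mathcal{G}_\alpha$, the expression $\hat{L}G_t$ acquires a meaning in $\mathcal{G}_\alpha$. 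Term-by-term differentiation of the series with respect to $t$ is then justified because the differentiated series is dominated by one of the same form, and it delivers $dG_t/dt=\hat{L}G_t$ in $\mathcal{G}_\alpha$; the initial condition is immediate from (\ref{8C}).

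Uniqueness follows from the same mechanism. If $D_t\in\mathcal{G}_\alpha$ is the difference of two classical solutions with $D_0=0$, then the integral identity $D_t=\int_0^t\hat{L}D_s\,ds$ can be iterated $n$ times by routing through $n$ equally-spaced intermediate parameters between $\alpha_0$ and $\alpha$, yielding a bound of the form $\|D_t\|_\alpha\le(etM)^n\sup_{s\le t}\|D_s\|_{\alpha_0}$ which forces $D_t=0$ for $t<T(\alpha,\alpha_0)$. I expect the main obstacle to lie not in any single estimate but in the careful bookkeeping required at every step: one must juggle three coupled parameters $\alpha_0<\alpha'<\alpha$, verify that the intermediate $\alpha'$ exists with $t_0<T(\alpha',\alpha_0)$, and ensure that the differentiation, the action of $\hat{L}$, and the convergence of the majorising series all take place consistently inside the scale. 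The limiting relation $T(\alpha',\alpha_0)\to T(\alpha,\alpha_0)$ as $\alpha'\nearrow\alpha$ is what ultimately makes the selection of $\alpha'$ always possible.
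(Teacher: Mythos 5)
Your existence and regularity steps are essentially the paper's own argument: the Picard iterates (\ref{8C}) are controlled through the Ovsjannikov bound (\ref{10C}), the identity $1/(eM)=T(\alpha,\alpha_0)$ gives convergence of the majorising geometric series for $t<T(\alpha,\alpha_0)$, and the upgrade to a classical solution is obtained exactly as in the paper by inserting an intermediate index $\alpha'\in(\alpha_0,\alpha)$ (using the continuity of $\alpha'\mapsto T(\alpha',\alpha_0)$) so that $G_t$ lies in $\mathcal{G}_{\alpha'}$ and $\hat{L}G_t$ makes sense in $\mathcal{G}_\alpha$.

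The genuine problem is your uniqueness step, which the paper only delegates to Tr\`eves. You claim the bound $\|D_t\|_\alpha\le(etM)^n\sup_{s\le t}\|D_s\|_{\alpha_0}$, i.e.\ you route the $n$ applications of $\hat{L}$ from $\mathcal{G}_{\alpha_0}$ up to $\mathcal{G}_\alpha$. This presupposes that the difference $D_s$ of the two solutions lies in the \emph{smaller} space $\mathcal{G}_{\alpha_0}$ (recall $\|\cdot\|_{\alpha}\le\|\cdot\|_{\alpha_0}$, so $\mathcal{G}_{\alpha_0}\subset\mathcal{G}_\alpha$). That is not available: a competing classical solution is only assumed to take values in $\mathcal{G}_\alpha$, and even the solution you construct is only shown to lie in $\mathcal{G}_{\alpha_1}$ for $\alpha_1\in(\alpha_0,\alpha)$, not in $\mathcal{G}_{\alpha_0}$ itself. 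Hence $\sup_{s\le t}\|D_s\|_{\alpha_0}$ may be infinite and your inequality is vacuous. The iteration has to be run in the opposite direction along the scale: fix any $\beta>\alpha$, iterate $D_t=\int_0^t\hat{L}D_s\,ds$ through equally spaced indices between $\alpha$ and $\beta$ to get $\|D_t\|_\beta\le\frac{t^n}{n!}\,(nM_{\alpha\beta})^n\sup_{s\le t}\|D_s\|_\alpha$, where the supremum is finite because classical solutions are continuous in $\mathcal{G}_\alpha$ on compact intervals; letting $n\to\infty$ gives $D_t=0$ in $\mathcal{G}_\beta$, hence $\lambda$-a.e., hence in $\mathcal{G}_\alpha$, for all $t<T(\beta,\alpha)$. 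Since this time length does not depend on the solutions, one then advances in time in steps of fixed size, restarting from the common value, to cover the whole interval $[0,T(\alpha,\alpha_0))$. With this reversal your uniqueness argument becomes the standard Ovsjannikov-type one the paper cites; as written, it does not go through.
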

\begin{proof}
Applying (\ref{10C}) in (\ref{8C}) we get that the sequence
$\{G^{(n)}_t\}_{n\in \mathbb{N}_0}$ converges in
$\mathcal{G}_\alpha$ uniformly on any $[0,T] \subset [0, T (\alpha,
\alpha_0))$. In fact, one can show that for any
$\alpha_1\in(\alpha_0,\alpha)$ this sequence converges in
$\mathcal{G}_{\alpha_1}$ to some $\tilde{G}_t$ uniformly  on any
$[0,T] \subset [0, T (\alpha_1, \alpha_0))$. Note that $T (\alpha,
\alpha_0)$ continuously depend on $\alpha$, therefore, we may
consider any $[0,T] \subset [0, T (\alpha, \alpha_0))$. Since
$\|\cdot\|_{\alpha}\leq\|\cdot\|_{\alpha'}$, $\tilde{G}_t$ coincide
with $G_t$ in $\mathcal{G}_{\alpha}$. Hence, $G_t$ belongs to any
$\mathcal{G}_{\alpha_1}$, in particular, $G_t$ is in the domain of
the (unbounded) operator $\hat{L}$ in the space
$\mathcal{G}_{\alpha}$. By \eqref{10C},  the sequence
$\{\frac{d}{dt}G^{(n)}_t\}_{n\in \mathbb{N}_0}$ also converges in
$\mathcal{G}_\alpha$ uniformly on any $[0,T] \subset [0, T (\alpha,
\alpha_0))$. Therefore, $G_t$ is a solution of (\ref{16A}),
(\ref{17A}). The uniqueness can be shown in the same way as in
\cite[p.~16,~17]{trev}.
\end{proof}

The statement just proven describes systems with any $\varkappa$ and
$c_\phi$. It is, however, possible to get more if one imposes
appropriate restrictions on these parameters. Namely, the dynamics
in this case is described by a $C_0$-semigroup
$S(t):\mathcal{G}_\alpha \rightarrow \mathcal{G}_\alpha$, where the
space is the same as in (\ref{BS}). Set
\begin{equation}
 \label{j}
\mathcal{G}_\alpha^+ = \{ G \in \mathcal{G}_\alpha \ : \ G \geq 0\}, \quad
\mathcal{H}_\alpha = \{ G \in \mathcal{G}_\alpha \ : \ |\cdot | G \in \mathcal{G}_\alpha\},
\end{equation}
and also
\begin{equation}
 \label{j0}
\mathcal{H}_\alpha^+ = \mathcal{H}_\alpha \cap \mathcal{G}_\alpha^+.
\end{equation}
\begin{theorem}
 \label{20tm}
Assume that
\begin{equation}
 \label{jj}
\varkappa c_\phi < 1/e.
\end{equation}
For such $\phi$, let $\alpha_\phi = \ln c_\phi$. Then, for every
$G_0 \in \mathcal{H}_{\alpha_\phi}$, the problem (\ref{16A}),
(\ref{17A}) has a unique classical solution $G_t\in
\mathcal{G}_{\alpha_\phi}$, $t\geq 0$, given by $G_t = S(t) G_{0}$,
where $\{S(t)\}_{t\geq 0}$ is a $C_0$-semigroup on
$\mathcal{G}_{\alpha_\phi}$.
\end{theorem}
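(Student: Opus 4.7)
The plan is to write $\hat L = A + B$, where $A$ is the death part $(AG)(\eta)=-|\eta|G(\eta)$ with $D(A)=\mathcal H_{\alpha_\phi}$ and $B = \hat L - A$ is the birth operator in (\ref{17A}), and then invoke a Miyadera--Voigt-type bounded-perturbation theorem. The specific choice $\alpha_\phi=\ln c_\phi$ appearing in the statement is forced because this is exactly the value of $\alpha$ that minimizes the relative $A$-bound of $B$, and condition (\ref{jj}) is precisely what makes this optimal bound strictly less than $1$.

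First I would observe that $A$ generates the positive contraction semigroup $(T(s)G)(\eta)=e^{-s|\eta|}G(\eta)$ on $\mathcal G_{\alpha_\phi}$, with $D(A)=\mathcal H_{\alpha_\phi}$. Next, using $|e(t_x,\cdot)|\le e(|t_x|,\cdot)$ and $e(\tau_x,\cdot)\le 1$, combined with the integration rule (\ref{12A}), identity (\ref{5A}) (which gives $\int_{\Gamma_0}e(|t_x|,\eta)e^{-\alpha|\eta|}\lambda(d\eta)=\exp(c_\phi e^{-\alpha})$), and the Mecke-type identity
\[
 \int_{\Gamma_0}\int_{\mathbb R^d}F(\xi\cup x)\,dx\,\lambda(d\xi)=\int_{\Gamma_0}\sum_{x\in\eta}F(\eta)\,\lambda(d\eta),
\]
I would derive the key estimate
\[
 \|BG\|_\alpha \le \varkappa \exp(\alpha + c_\phi e^{-\alpha})\,\||\cdot|\,G\|_\alpha .
\]
The function $\alpha\mapsto\alpha+c_\phi e^{-\alpha}$ attains its minimum $1+\ln c_\phi$ at $\alpha_\phi=\ln c_\phi$, so the prefactor becomes $e\varkappa c_\phi =: q$, which is $<1$ by (\ref{jj}). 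Hence $\|BG\|_{\alpha_\phi}\le q\,\|AG\|_{\alpha_\phi}$ on $\mathcal H_{\alpha_\phi}$.

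From this, Miyadera's condition on $T(s)$ follows easily: for any $t_0>0$,
\[
 \int_0^{t_0}\|BT(s)G\|_{\alpha_\phi}\,ds \le q\int_{\Gamma_0}(1-e^{-t_0|\eta|})\,e^{-\alpha_\phi|\eta|}|G(\eta)|\,\lambda(d\eta) \le q\,\|G\|_{\alpha_\phi},
\]
after interchanging integrations and evaluating $\int_0^{t_0}|\eta|e^{-s|\eta|}\,ds=1-e^{-t_0|\eta|}$. By the Miyadera--Voigt perturbation theorem, $\hat L=A+B$ with domain $\mathcal H_{\alpha_\phi}$ generates a $C_0$-semigroup $\{S(t)\}_{t\ge 0}$ on $\mathcal G_{\alpha_\phi}$, and for $G_0\in\mathcal H_{\alpha_\phi}=D(\hat L)$ the orbit $G_t=S(t)G_0$ is the unique classical solution of (\ref{16A}), (\ref{17A}) by the standard generation theory.

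The main technical issue is the sharpness of the weight selection: any $\alpha\ne\alpha_\phi$ would yield a worse relative bound than $e\varkappa c_\phi$, and (\ref{jj}) would then be insufficient to close the perturbation argument. The entire content of the theorem is the identification of $\alpha=\ln c_\phi$ as the critical weight at which $B$ is $A$-bounded with relative bound strictly less than $1$, enabling the global-in-time semigroup construction.
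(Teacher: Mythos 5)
Your proof is correct, but it follows a genuinely different route from the paper. You keep the full birth operator $B=\hat{L}-A$ in one piece, establish the single-weight relative bound $\|BG\|_{\alpha}\leq \varkappa\exp\left(\alpha+c_\phi e^{-\alpha}\right)\||\cdot|G\|_{\alpha}$, note that the prefactor is minimized at $\alpha_\phi=\ln c_\phi$ where it equals $e\varkappa c_\phi<1$ by (\ref{jj}), and then verify the Miyadera integral condition for the death semigroup $e^{-s|\eta|}$, so that the Miyadera--Voigt theorem gives the generator $(\hat{L},\mathcal{H}_{\alpha_\phi})$ directly. The paper instead splits the birth part into its positive and negative components $B_{+}$, $B_{-}$ according to the parity of $|\eta\setminus\xi|$ (see (\ref{j1})--(\ref{j111})), applies a Thieme--Voigt-type criterion (Proposition~\ref{TVpn}, with the two weights $\alpha'<\alpha$ in (\ref{j4})--(\ref{j5})) to show that the dominating operator $\hat{L}^{+}=A+B_{+}+B_{-}$ generates a positive $C_0$-semigroup, and then invokes the Banasiak--Arlotti comparison result (Proposition~\ref{BAprop}) to conclude that $A+B_{+}-B_{-}=\hat{L}$ generates. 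Your argument is more elementary and self-contained (one standard perturbation theorem, no even/odd splitting, and the domain $\mathcal{H}_{\alpha_\phi}$ comes out exactly, with uniqueness of the classical solution from standard generation theory); the paper's route stays within positive-perturbation theory and additionally produces the positive dominating semigroup generated by $\hat{L}^{+}$. One side remark of yours is overstated: it is not true that (\ref{jj}) fails to close the argument for every $\alpha\neq\alpha_\phi$ --- since the inequality $\varkappa\exp(\alpha+c_\phi e^{-\alpha})<1$ is strict at $\alpha_\phi$, it persists for all $\alpha$ in a neighbourhood of $\alpha_\phi$, exactly as the paper points out in Remark~\ref{1rk}(a); this does not affect the validity of your proof for $\alpha=\alpha_\phi$.
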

\begin{remark} \label{1rk}
(a) The above theorem is true not only for $\alpha = \alpha_\phi$
but also for $\alpha \in (\alpha_\phi - \delta , \alpha_\phi +
\delta)$ for some $\delta >0$, see the proof below. (b) The
condition (\ref{jj}) is well-known, see~\cite[Chapter 4]{Ruelle}. It
provides the convergence of cluster expansions for the gas of
classical particles with the pair-wise repulsion $U(x,y) =
\phi(x-y)$.
\end{remark}
In the proof of Theorem~\ref{20tm} we use two statements. The first
one is an adaptation of~\cite[Corollary 5.16]{[BA]}.
\begin{proposition} \label{BAprop}
Given $\alpha \in \mathbb{R}$, assume that $A$ is the generator of a positive $C_0$-semigroup
in $\mathcal{G}_\alpha$ and $B= B_1 - B_2$ be such that $A + B_1 + B_2$
is also the generator of a positive $C_0$-semigroup. Then $A +B$ generates a $C_0$-semigroup in $\mathcal{G}_\alpha$.
\end{proposition}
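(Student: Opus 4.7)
The plan is to apply a Miyadera--Voigt type perturbation theorem, viewing $A + B = A_0 - 2B_2$ as a perturbation of $A_0 := A + B_1 + B_2$, which by hypothesis generates a positive $C_0$-semigroup $S(t)$ on the abstract $L^1$-space $\mathcal{G}_\alpha$. The positive cone $\mathcal{G}_\alpha^+$ from (\ref{j}) turns $\mathcal{G}_\alpha$ into a Banach lattice, and in the setting of the cited result each of $B_1$ and $B_2$ is itself a positive operator on $D(A) = D(A_0)$; this lattice structure is what I intend to exploit.

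The key step is to verify a Miyadera integrability condition for $B_2$ relative to $S(\cdot)$: find $t_0 > 0$ and $q < 1/2$ such that
\[
\int_0^{t_0} \|B_2 S(s) u\|_\alpha\, ds \;\leq\; q \|u\|_\alpha, \qquad u \in D(A_0)^+ .
\]
For such $u$, positivity of $S(s)$ combined with $0 \leq B_2 \leq B_1 + B_2 = A_0 - A$ on $D(A_0)$ yields the pointwise inequality $0 \leq B_2 S(s) u \leq (A_0 - A) S(s) u$ in $\mathcal{G}_\alpha^+$. Passing to $\mathcal{G}_\alpha$-norms and using the identity $\int_0^{t_0} A_0 S(s) u\, ds = S(t_0) u - u$, together with the analogous formula for the positive semigroup $T(t)$ generated by $A$ and the comparison $S(t) \geq T(t)$ coming from $A_0 \geq A$, reduces the bound to a quantity controlled by $\|S(t_0) u - T(t_0) u\|_\alpha$, which vanishes as $t_0 \downarrow 0$. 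Extending from $D(A_0)^+$ to general $D(A_0)$ by writing $u = u_{+} - u_{-}$ costs at most a factor of $2$, so $q$ can be made arbitrarily small by shrinking $t_0$.

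Applying the Miyadera--Voigt theorem to the pair $(A_0, -2B_2)$ then produces a $C_0$-semigroup on $\mathcal{G}_\alpha$ whose generator is $A_0 - 2B_2 = A + B$, which is the desired conclusion. The main obstacle I anticipate is the delicate bookkeeping required to extract a constant $q$ independent of $u$ from the comparison of $S(t)$ and $T(t)$; this is exactly where the full force of [BA, Cor.~5.16] is felt, and my approach amounts to adapting its argument to the weighted $L^1$-space $\mathcal{G}_\alpha$.
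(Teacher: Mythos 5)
The paper itself offers no proof of Proposition~\ref{BAprop}: it is quoted as an adaptation of \cite[Corollary~5.16]{[BA]}, so what has to be judged is whether your Miyadera--Voigt argument is a valid substitute, and as written it is not, because the key integrability estimate is never actually established. The reduction step rests on an identity that fails: $\int_0^{t_0} A S(s)u\,ds$ is not $T(t_0)u-u$, since the semigroup under the integral is $S$, not $T$; hence $\int_0^{t_0}(B_1+B_2)S(s)u\,ds$ is not $S(t_0)u-T(t_0)u$. The correct Duhamel formula is $S(t_0)u-T(t_0)u=\int_0^{t_0}T(t_0-s)(B_1+B_2)S(s)u\,ds$, and in the $L^1$-lattice $\mathcal{G}_\alpha$ this bounds $\|S(t_0)u-T(t_0)u\|_\alpha$ \emph{by} (a multiple of) $\int_0^{t_0}\|(B_1+B_2)S(s)u\|_\alpha\,ds$, i.e.\ the inequality points in the opposite direction to the one you need; inverting it would require a uniform lower bound for $T(r)$ on the positive cone, which is unavailable (in the application $T(r)$ is multiplication by $e^{-r|\eta|}$, whose infimum over $\Gamma_0$ is $0$). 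Moreover, even granting a reduction to $\|S(t_0)u-T(t_0)u\|_\alpha$, strong continuity gives smallness only for each fixed $u$, whereas Miyadera--Voigt demands a single pair $(t_0,q)$, $q<1$, with $2\int_0^{t_0}\|B_2S(s)u\|_\alpha\,ds\le q\|u\|_\alpha$ for \emph{all} $u\in D(A_0)$; nothing in the hypotheses yields such uniformity (a graph-norm bound $Ct_0\|u\|_{D(A_0)}$ would not suffice). Two smaller points: the splitting $u=u_+-u_-$ requires $u_\pm\in D(A_0)$, which is not automatic for an abstract generator (it does hold for $\mathcal{H}_\alpha$ in the application, but should be said), and your argument uses positivity of $B_1,B_2$, which is the intended but unstated hypothesis.

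The argument that actually works---and is the content of the cited \cite[Corollary~5.16]{[BA]}---is not a smallness argument at all: one exploits the lattice structure of $\mathcal{G}_\alpha=L^1$ to dominate, term by term, the Dyson--Phillips (equivalently, resolvent) series built from the perturbation $B_1-B_2$ by the corresponding series built from $B_1+B_2$; the latter converges because $A+B_1+B_2$ generates a positive $C_0$-semigroup, and the resulting absolute convergence produces the semigroup generated by $A+B_1-B_2$. If you want a self-contained proof rather than the citation the paper relies on, that domination argument is the one to adapt; the Miyadera route would at best work under additional quantitative assumptions on $B_2$ relative to $S(\cdot)$ that are not part of the proposition.
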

For $\eta\in \Gamma_0$, we set $\Xi_e(\eta) = \{ \xi \subset \eta \
: \ |\eta \setminus \xi|\text{ is even}\}$ and $\Xi_o(\eta) = \{ \xi
\subset \eta \ : \ |\eta \setminus \xi|\text{ is odd}\}$, and
thereby
\begin{align}
\label{j1} (B_+ G ) (\eta)& =  \varkappa \sum_{\xi \in \Xi_e(\eta)}
\int_{\mathbb{R}^d} e(\tau_x, \xi)
 e(t_x, \eta \setminus \xi) G(\xi\cup x) d x, \\\label{j111}
(B_{-} G ) (\eta)& = - \varkappa \sum_{\xi \in \Xi_0(\eta)}
 \int_{\mathbb{R}^d} e(\tau_x, \xi) e(t_x, \eta \setminus \xi) G( \xi\cup x) d x.
\end{align}
We also set
\begin{equation}
 \label{j2}
(A G ) (\eta) = - |\eta| G(\eta),
\end{equation}
and
\begin{align}
 \label{j3}
(\hat{L}^+ G)(\eta) & = ((A + B_{+} + B_{-})G)(\eta)\\ & = - |\eta|
G(\eta) + \varkappa \sum_{\xi\subset \eta} \int_{\mathbb{R}^d}
e(\tau_x; \xi) e(|t_x|; \eta \setminus \xi) G( \xi\cup x) d x.
\nonumber
\end{align}
Clearly, for any $\alpha$, $A$ with $Dom (A) = \mathcal{H}_\alpha$
generates a positive semigroup of contractions in
$\mathcal{G}_\alpha$. All the three operators $-A$, $B_{\pm}$ are
positive. The second statement we need to prove Theorem~\ref{20tm}
is an adaptation to our case of~\cite[Theorem 2.7]{TV}.
\begin{proposition}
 \label{TVpn}
Let $A$ and $B_{\pm}$ be as above. Suppose that there exist real $\alpha$ and $\alpha'$, $\alpha' < \alpha$, such that
\begin{align}
 \label{j4}
& \forall G\in \mathcal{H}_\alpha^+: &&\int_{\Gamma_0} (\hat{L}^+
G)(\eta) \exp(-\alpha |\eta|) \lambda (d \eta) \leq 0,
 \\ \label{j5}
& \forall G\in \mathcal{H}_{\alpha'}^+: &&\int_{\Gamma_0} (\hat{L}^+
G)(\eta) \exp(-\alpha' |\eta|)
 \lambda (d \eta)\\&&&\leq
 C \int_{\Gamma_0} G(\eta) \exp(-\alpha' |\eta|) \lambda (d \eta)  - \varepsilon \int_{\Gamma_0} |\eta| G(\eta) \exp(-\alpha |\eta|) \lambda(d\eta) \nonumber
\end{align}
for some positive $C$ and $\varepsilon$. Then the closure of
$\hat{L}^+$ in $\mathcal{G}_\alpha$ generates a positive
$C_0$-semigroup in this space.
\end{proposition}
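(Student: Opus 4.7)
The plan is to follow the Kato--Voigt technique of constructing a minimal positive $C_0$-semigroup by monotone approximation and then using condition (\ref{j5}) as an honesty criterion to identify its generator with the closure of $\hat{L}^+$. First I would construct bounded approximants by truncation: for $N\in\mathbb{N}$, let $B_\pm^{(N)}$ denote the operator obtained from (\ref{j1})--(\ref{j111}) by inserting the indicator of $\{|\eta|\leq N\}$. These are bounded on $\mathcal{G}_\alpha$, so $\hat{L}^{+,(N)} := A + B_+^{(N)} + B_-^{(N)}$ is a bounded perturbation of $A$ and generates a positive $C_0$-semigroup $\{T_N(t)\}_{t\geq 0}$ via the Dyson--Phillips series. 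Since the omitted parts of $B_\pm$ are pointwise nonnegative, condition (\ref{j4}) remains valid for $\hat{L}^{+,(N)}$ on $\mathcal{H}_\alpha^+$, which in the $L^1$-type space $\mathcal{G}_\alpha$ is precisely the dissipativity inequality making each $T_N(t)$ a positive contraction.

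Next, because $B_\pm^{(N)}$ increase monotonically to $B_\pm$ on the positive cone and the terms of the Dyson--Phillips series are nonnegative, $T_N(t)G_0$ is monotone nondecreasing in $N$ for $G_0\in\mathcal{G}_\alpha^+$ and uniformly bounded by $\|G_0\|_\alpha$. Monotone convergence then yields a limit $T(t)G_0\in\mathcal{G}_\alpha^+$ which, extended by linearity, defines a positive contraction semigroup $\{T(t)\}_{t\geq 0}$ on $\mathcal{G}_\alpha$. A standard Trotter--Kato-type argument gives strong continuity and identifies the generator $\mathcal{A}$ as a closed extension of $\hat{L}^+|_{\mathcal{H}_\alpha}$, hence of its closure $\overline{\hat{L}^+}$.

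The main obstacle is the final step: showing $\mathcal{A}=\overline{\hat{L}^+}$, i.e.\ that no mass is lost in passing from $T_N$ to $T$. This is where (\ref{j5}) is essential. For $\mu>0$ and nonnegative $h\in\mathcal{G}_{\alpha'}$, the resolvents $G^{(N)}_\mu := (\mu-\hat{L}^{+,(N)})^{-1}h$ are positive and, by monotonicity of $\hat{L}^{+,(N)}$ on the positive cone, converge monotonically to $G_\mu := (\mu-\mathcal{A})^{-1}h$. Substituting $\hat{L}^{+,(N)}G^{(N)}_\mu = \mu G^{(N)}_\mu - h$ into (\ref{j5}) (which also holds for $\hat{L}^{+,(N)}$ by the same pointwise nonnegativity argument) gives
\begin{equation*}
\varepsilon\int_{\Gamma_0}|\eta|\,G^{(N)}_\mu(\eta)\,e^{-\alpha|\eta|}\lambda(d\eta) \leq (C-\mu)\|G^{(N)}_\mu\|_{\alpha'} + \|h\|_{\alpha'},
\end{equation*}
so for $\mu>C$ one obtains a uniform-in-$N$ bound on $\int|\eta|G^{(N)}_\mu e^{-\alpha|\cdot|}d\lambda$. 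This bound is enough to pass to the limit inside $\hat{L}^{+,(N)}G^{(N)}_\mu$, yielding $G_\mu\in\mathrm{Dom}(\overline{\hat{L}^+})$ with $(\mu-\overline{\hat{L}^+})G_\mu = h$; since $\mathcal{G}_{\alpha'}$ is dense in $\mathcal{G}_\alpha$, this forces $(\mu-\mathcal{A})^{-1}=(\mu-\overline{\hat{L}^+})^{-1}$ and hence $\mathcal{A}=\overline{\hat{L}^+}$. The delicate point is balancing the two weights $\alpha'<\alpha$ appearing in (\ref{j5}): one must exploit the embedding $\mathcal{G}_{\alpha'}\hookrightarrow\mathcal{G}_\alpha$ together with the fact that the $-\varepsilon|\eta|e^{-\alpha|\cdot|}$ term dominates $Ce^{-\alpha'|\cdot|}$ for large $|\eta|$ in order to close the estimate and control the tail in the number of particles.
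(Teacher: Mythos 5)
Your argument is correct in substance, but be aware that the paper does not prove this proposition at all: it is stated as an adaptation of \cite[Theorem 2.7]{TV} and used as a black box in the proof of Theorem~\ref{20tm}. What you have written is essentially a reconstruction of the Kato--Voigt--Thieme argument behind that citation: build the minimal positive semigroup by approximating $B_{\pm}$ from below (you truncate in the particle number, whereas the classical route multiplies the perturbation by $r\uparrow 1$; both work, since all that is used is positivity of the approximants, monotonicity, and the fact that (\ref{j4}) survives the approximation), and then use (\ref{j5}) as an honesty criterion via the resolvent estimate $\varepsilon\int_{\Gamma_0}|\eta|\,G^{(N)}_\mu(\eta)e^{-\alpha|\eta|}\lambda(d\eta)\leq (C-\mu)\|G^{(N)}_\mu\|_{\alpha'}+\|h\|_{\alpha'}$, which for $\mu>C$ gives exactly the uniform control on $AG^{(N)}_\mu$ needed to show that the limiting resolvent lands in $\mathrm{Dom}\bigl(\overline{\hat{L}^+}\bigr)$ and hence that the minimal generator coincides with the closure. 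Two points should be tightened. First, before inserting $G^{(N)}_\mu$ into (\ref{j5}) you must verify $G^{(N)}_\mu\in\mathcal{H}_{\alpha'}^{+}$; this requires running the truncated problem in $\mathcal{G}_{\alpha'}$ as well (possible because $B^{(N)}_{\pm}$ are bounded there and the truncated version of (\ref{j5}) makes $\hat{L}^{+,(N)}$ quasi-dissipative on the positive cone, so the resolvents in $\mathcal{G}_{\alpha'}$ and $\mathcal{G}_{\alpha}$ are consistent for $\mu>C$). Second, your closing heuristic that the term $\varepsilon|\eta|e^{-\alpha|\eta|}$ dominates $Ce^{-\alpha'|\eta|}$ for large $|\eta|$ is false: since $\alpha'<\alpha$, the pointwise comparison goes the other way. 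Fortunately it is also unnecessary, because your displayed estimate only uses the choice $\mu>C$, not any pointwise domination of weights.
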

\begin{proof}[Proof of Theorem~\ref{20tm}.] Given $\alpha$, for $G\in
\mathcal{H}_\alpha^+$, similarly as in (\ref{10}) we get
\begin{align}
 \label{j6}
& \int_{\Gamma_0} (\hat{L}^+ G)(\eta) \exp(-\alpha |\eta|)\lambda (d
\eta) \\
 \leq & - \left(1 - \varkappa \exp\left( \alpha + c_\phi e^{-\alpha}\right) \right)
\int_{\Gamma_0} |\eta| G(|\eta|)\exp(-\alpha |\eta|) \lambda (d \eta).\nonumber
\end{align}
By (\ref{jj}), we have $\varkappa \exp\left( \alpha_\phi + c_\phi
e^{-\alpha_\phi}\right) < 1$; hence, one can pick small enough
$\delta>0$ such that $\varkappa \exp\left( \alpha + c_\phi
e^{-\alpha}\right) < 1$ for any $\alpha \in (\alpha_\phi - \delta,
\alpha_\phi + \delta)$. Then we fix such $\alpha$ and obtain
(\ref{j4}) as the coefficient in the last line in (\ref{j6}) is
negative. Next we pick $\sigma>0$ such that $\alpha' := \alpha -
\sigma$ is also in $(\alpha_\phi - \delta, \alpha_\phi + \delta)$.
For this $\alpha'$, ${\rm LHS(\ref{j5})} \leq 0$ for $G\in
\mathcal{H}^{+}_{\alpha'}$. On the other hand,
\begin{align*}
 \int_{\Gamma_0} |\eta| G(|\eta|)\exp(-\alpha |\eta|) \lambda (d \eta) & =  \int_{\Gamma_0} |\eta| e^{- \sigma |\eta|}
G(|\eta|)\exp(-\alpha' |\eta|) \lambda (d \eta)\\ & \leq
\frac{1}{e\sigma} \int_{\Gamma_0} G(|\eta|)\exp(-\alpha' |\eta|)
\lambda (d \eta),
\end{align*}
which yields that ${\rm RHS(\ref{j5})} \geq 0$ for any $C$ and
sufficiently small $\varepsilon$. Hence, we can apply
Proposition~\ref{TVpn}, by which the closure of $\hat{L}^+$ in
$\mathcal{G}_\alpha$ generates a positive $C_0$-semigroup. But,
under the condition (\ref{jj}), $\hat{L}^+$ is closed as $A$ is
closed. Thus, we are able now to apply Proposition~\ref{BAprop} and
complete the proof.
\end{proof}

\subsection{The evolution of correlation functions and states}

In this subsection,
the problem (\ref{20A}), (\ref{21A}) will be studied in the Banach space, c.f. (\ref{BS}),
\begin{equation}
 \label{2D}
 \mathcal{K}_\alpha = L^\infty (\Gamma_0, e^{\alpha |\cdot|} d \lambda)
\end{equation}
which we equip with the norm
\begin{equation}
 \label{o5}
\|k\|_\alpha = {\rm ess \ sup}\left\{ |k(\eta)| \exp(\alpha |\eta|): \eta \in \Gamma_0\right\}.
\end{equation}
For the sake of convenience, here we use the same notation $\|\cdot\|_\alpha$ as in (\ref{6}), which, however,
should not cause any ambiguity as it will always be clear from the context which norm is meant.

Recall that $\mathcal{M}^1_{\rm fm} (\Gamma)$ stands for the set of probability measures on $\Gamma$ obeying
(\ref{8A}). Given $\alpha\in \mathbb{R}$, by $\mathcal{M}_\alpha$ we denote the set of all those $\mu\in\mathcal{M}^1_{\rm fm} (\Gamma)$
whose correlation functions, defined in (\ref{9AA}), belong to $\mathcal{K}_\alpha$.
The main result of this section
is contained in the following statement.
\begin{theorem}
 \label{otm}
Fix any $\alpha_0\in \mathbb{R}$ and any $\alpha < \alpha_0$, and let $T(\alpha_0, \alpha)$ be as in (\ref{7}).
Then, for every $t\in (0, T(\alpha_0, \alpha))$, there exists $\alpha_t \in (\alpha, \alpha_0)$ such that
the problem (\ref{20A})
 with $k_0\in \mathcal{K}_{\alpha_0}$
has a unique classical solution $k_t \in \mathcal{K}_{\alpha_t}$.
This solution is the correlation function of a certain (unique) $\mu_t \in \mathcal{M}_{\alpha_t}$, which yields the evolution of states
$\mu_0 \mapsto \mu_t$ of the considered model in the scale $\{\mathcal{M}_{\alpha_t}\}_{
t\in [0, T(\alpha_0, \alpha))}$.
\end{theorem}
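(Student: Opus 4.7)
The plan is to transplant the Ovsjannikov--Picard construction of Theorem~\ref{1tm} to the dual Cauchy problem (\ref{20A}), (\ref{21A}) in the scale $\{\mathcal{K}_\alpha\}$ (where now the embedding goes $\mathcal{K}_{\alpha_0}\subset\mathcal{K}_\alpha$ for $\alpha<\alpha_0$), and then to upgrade the resulting abstract $k_t$ to a correlation function by invoking the criterion of~\cite{Nancy}.

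First I would establish a one-step operator bound analogous to (\ref{10}),
\[
\|L^\Delta k\|_\alpha\le\frac{1}{(\alpha_0-\alpha)e}\Bigl[1+\varkappa\exp\bigl(\alpha_0+c_\phi e^{-\alpha_0}\bigr)\Bigr]\|k\|_{\alpha_0},\qquad\alpha<\alpha_0,
\]
which one derives directly from (\ref{21A}) using $|\tau_x|\le 1$, the elementary inequality $|\eta|e^{-(\alpha_0-\alpha)|\eta|}\le 1/[(\alpha_0-\alpha)e]$, the Lebesgue--Poisson identity (\ref{5A}) and (\ref{12A}), exactly paralleling the treatment of $H_1,H_2$ that led to (\ref{9}); alternatively it can be read off (\ref{9}) through the $L^1$--$L^\infty$ duality between $\mathcal G_\alpha$ and $\mathcal K_\alpha$ induced by (\ref{19A}). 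Subdividing $[\alpha,\alpha_0]$ into $m$ equal parts as in (\ref{10C}) then yields $\|(L^\Delta)^m\|_{\alpha_0\alpha}\le(mM)^m$, whence the Picard iterates
\[
k_t^{(0)}=k_0,\qquad k_t^{(n)}=k_0+\int_0^t L^\Delta k_s^{(n-1)}\,ds
\]
converge in $\mathcal K_{\alpha_t}$, together with their $t$-derivatives, uniformly on every compact sub-interval of $[0,T(\alpha_0,\alpha_t))$, for each $\alpha_t\in(\alpha,\alpha_0)$. Since $T(\alpha_0,\cdot)$ is continuous and $T(\alpha_0,\alpha_t)\to T(\alpha_0,\alpha)$ as $\alpha_t\downarrow\alpha$, for any given $t<T(\alpha_0,\alpha)$ one can pick $\alpha_t\in(\alpha,\alpha_0)$ with $t<T(\alpha_0,\alpha_t)$; this produces the classical solution $k_t\in\mathcal K_{\alpha_t}$, and uniqueness follows verbatim from the Ovsjannikov argument of~\cite[pp.~16,17]{trev}, just as in Theorem~\ref{1tm}.

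The remaining, and genuinely difficult, step is to recognize $k_t$ as the correlation function of a probability measure $\mu_t\in\mathcal M_{\alpha_t}$. For this I would invoke the characterization of~\cite{Nancy}: a measurable $k:\Gamma_0\to\mathbb R_+$ satisfying a Ruelle-type bound $k(\eta)\le C^{|\eta|}$ (automatic from $k_t\in\mathcal K_{\alpha_t}$) together with the $K$-positivity property $\langle\!\langle G,k\rangle\!\rangle\ge 0$ for every $G\in B_{\rm bs}(\Gamma_0)$ with $KG\ge 0$ is the correlation function of a unique $\mu\in\mathcal M^1_{\rm fm}(\Gamma)$. The Ruelle bound is immediate. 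The $K$-positivity I would transport along the flow through the duality (\ref{1D}) and Theorem~\ref{1tm}: for $G_0\in B_{\rm bs}(\Gamma_0)$ one has $\langle\!\langle G_0,k_t\rangle\!\rangle=\langle\!\langle G_t,k_0\rangle\!\rangle=\int_\Gamma (KG_t)(\gamma)\,\mu_0(d\gamma)$, so the problem reduces to the statement that the quasi-observable evolution $G_0\mapsto G_t$ sends $\{G:KG\ge 0\}$ into itself, which is precisely the preservation-of-positivity input furnished by~\cite{Nancy}. The main obstacle of the proof is exactly this identification: the Ovsjannikov scheme produces an abstract element of $\mathcal K_{\alpha_t}$ with no probabilistic content, and turning it into a bona fide correlation function is where all the structural (as opposed to purely analytic) information about the Glauber generator (\ref{R20}) must be used.
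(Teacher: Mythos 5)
Your analytic half --- the Ovsjannikov--Picard iteration for (\ref{20A}) in the scale $\{\mathcal{K}_\alpha\}$, the bound $\|(L^\Delta)^m\|_{\alpha_0\alpha_t}\le (mM)^m$, the choice of $\alpha_t$ with $t<T(\alpha_0,\alpha_t)$, and uniqueness via \cite{trev} --- is exactly the paper's construction and is fine.

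The identification step, however, contains a genuine gap. The characterization you invoke (Ruelle bound plus $K$-positivity, i.e. $\langle\!\langle G,k\rangle\!\rangle\ge 0$ whenever $KG\ge 0$, implies $k=k_\mu$ for a unique $\mu\in\mathcal{M}^1_{\rm fm}(\Gamma)$) is Lenard's theorem in the form of \cite{Tobi}, not a result of \cite{Nancy}; that is only a citation slip. What does not work as written is your reduction of the $K$-positivity of $k_t$ to the claim that the quasi-observable flow $G_0\mapsto G_t$ of Theorem~\ref{1tm} preserves the cone $\{G:KG\ge 0\}$, allegedly ``furnished by \cite{Nancy}''. Garcia and Kurtz construct a Markov process on $\Gamma$ associated with $L$; they say nothing about the evolution generated by $\hat L=K^{-1}LK$ on the spaces $\mathcal{G}_\alpha$. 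To extract positivity of $G\mapsto G_t$ from the process you would have to identify $KG_t$ with the action of the process semigroup on $KG_0$, i.e. prove that the Ovsjannikov solution of (\ref{16A}), pushed through $K$, coincides with the probabilistic evolution of observables --- an identification of exactly the same nature and difficulty as the one you are trying to establish, so the hard structural step is left unproved (note also that $KG_t$ need not be defined pointwise for $G_t\in\mathcal{G}_\alpha$, and $KG_0\ge 0$ does not mean $G_0\ge 0$, so this is not a soft positivity statement). In addition, the duality (\ref{1D}) between your two Picard solutions is asserted rather than proved, though that part is routine (pair the series term by term).

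The paper closes the loop differently and more economically: by \cite[Theorem 2.13]{Nancy} the process exists, hence the evolution of states $\mu_0\mapsto\mu_t$ exists, and the correlation functions $k_{\mu_t}$ of these laws satisfy (\ref{20A}); the uniqueness you have already established in $\mathcal{K}_{\alpha_t}$ then forces $k_t=k_{\mu_t}$, which is (\ref{8D}). In other words, the probabilistic input is used at the level of states, where the uniqueness for (\ref{20A}) does the identification, rather than at the level of quasi-observables, where no such identification is available. If you wish to keep the Lenard-positivity route, you must separately prove that the flow of (\ref{20A}) propagates positive definiteness; that is an independent piece of work, not a quotable fact.
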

\begin{proof}
First we prove the existence of the solution $k_t$ in the Banach space $\mathcal{K}_{\alpha_t}$ and then
show that it is a correlation function.

As in (\ref{8}), we seek the solution of (\ref{20A}) as the limit of the sequence $\{k^{(n)}_t \}_{n\in \mathbb{N}_0}$, where
\begin{equation}
 \label{3D}
k_t^{(n)} = k_0 + \int_0^t L^\Delta k^{(n-1)}_s d s, \quad k_t^{(0)}
= k_0,
\end{equation}
which yields, c.f. (\ref{8C}),
\begin{equation}
 \label{4D}
 k_t^{(n)} = k_0 + \sum_{m=1}^n \frac{1}{m!} t^m \left( L^\Delta \right)^m k_0.
\end{equation}
By (\ref{o5}),
\[
|k(\eta)| \leq \|k\|_\alpha \exp( - \alpha |\eta|).
\]
Thus, for $\alpha'$ and $\alpha''$ as in (\ref{9}), we get from (\ref{21A})
\begin{align*}
 \left\vert (L^\Delta k)(\eta)\right\vert \leq &\|k\|_{\alpha''}
\exp( - \alpha' |\eta|) \bigg{\{}|\eta| \exp( - (\alpha'' - \alpha')
|\eta|)\\&
 + \varkappa \exp\left( \alpha'' - (\alpha'' - \alpha')|\eta|\right) \sum_{x\in \eta}
 \int_{\Gamma_0} e(|t_x|, \xi) \exp( - \alpha'' |\xi|) \lambda (d\xi)
\bigg{\}}.
\end{align*}
Similarly as in producing (\ref{9}) we then get from the latter
\begin{equation}
 \label{o6}
\|L^\Delta k\|_{\alpha'} \leq \frac{\|k\|_{\alpha''}}
{(\alpha''-\alpha')e}\left[1 + \varkappa \exp\left(\alpha''+ c_\phi
e^{-\alpha''} \right) \right].
\end{equation}
Hence, $L^\Delta $ can be defined as a bounded linear operator
$L^\Delta :\mathcal{K}_{\alpha''} \rightarrow \mathcal{K}_{\alpha'}$ with the norm
\begin{equation}
 \label{o7}
\|L^\Delta \|_{\alpha''\alpha'} \leq \frac{1}{(\alpha'' -\alpha')e}
 \left[1 + \varkappa \exp\left(\alpha_0+ c_\phi e^{-\alpha} \right) \right].
\end{equation}
Now we fix $t\in (0, T(\alpha_0, \alpha))$, and for $\tilde{\alpha}
\in (\alpha, \alpha_0)$ and a given $m\in \mathbb{N}$, set $\alpha_l
= \alpha_0 - l \epsilon$, $\epsilon = (\alpha_0 -
\tilde{\alpha})/m$. Then, by (\ref{o7}),
\begin{align}
 \label{5D}
 \| \left( L^\Delta \right)^m \|_{\alpha_0 \tilde{\alpha}}& \leq
m^m \left[\frac{1 + \varkappa \exp\left(\alpha_0+ c_\phi e^{-\alpha} \right)}{(\alpha_0 - \tilde{\alpha})e} \right]^m\\
&= \left( \frac{m}{e}\right)^m \left(\frac{\alpha_0 -
\alpha}{\alpha_0 - \tilde{\alpha}} \right)^m \frac{1}{[T(\alpha_0,
\alpha)]^m}, \nonumber
\end{align}
see (\ref{7}). Pick $\delta>0$ such that
$t+\delta<T(\alpha_0,\alpha)$. Then for
\begin{equation}
 \label{alpha}
\alpha<\alpha_t \leq \alpha_0 \left( 1 - \frac{t+\delta}{T(\alpha_0,
\alpha)}\right) + \alpha \frac{t+\delta}{T(\alpha_0, \alpha)},
\end{equation}
we have from (\ref{5D})
\[
 \| \left( L^\Delta \right)^m \|_{\alpha_0 {\alpha}_t} \leq \left( \frac{m}{(t+\delta)e}\right)^m.
\]
Applying this estimate in (\ref{3D}) we obtain the convergence of
the sequence $\{k_t^{(n)}\}_{n\in \mathbb{N}_0}$ in
$\mathcal{K}_{\alpha_t}$, which yields the existence and uniqueness
of the solution $k_t$ similarly to that in the proof of
Theorem~\ref{1tm}.

Now let us show that the solution just constructed is such that, for any $t\in (0, T(\alpha_0, \alpha))$,
there exists
$\mu_t \in \mathcal{M}^1_{\rm fm} (\Gamma)$ such that, c.f. (\ref{9AA}),
\begin{equation}
 \label{8D}
 k_t (\eta) = \frac{d (K^* \mu_t)}{d \lambda} (\eta),
\end{equation}
if $k_0\in \mathcal{K}_{\alpha_0}$ is the correlation function of
the corresponding $\mu_0 \in \mathcal{M}^1_{\rm fm} (\Gamma)$. Under
the condition (\ref{13A}), there exists a proper $\mathcal{S}\subset
\Gamma$ and an $\mathcal{S}$-valued process with sample paths in the
Skorokhod space $D_\mathcal{S}(\mathbb{R}_+)$ associated with $L$,
see~\cite[Theorem 2.13]{Nancy}. This yields the evolution $\mu_0
\mapsto \mu_t$, where $\mu_t$ is the law of the process,
 and hence the evolution of the corresponding correlation functions
$k_{\mu_0} \mapsto k_{\mu_t}$, which satisfy (\ref{20A}). By the
uniqueness just established, $k_t = k_{\mu_t}$ for all $t \in [0,
T(\alpha_0, \alpha))$, which yields (\ref{8D}) and hence completes
the proof.
\end{proof}
\begin{remark}
 \label{Urk}
Theorem~\ref{otm} establishes the evolution $k_0 \mapsto k_t$ which
takes places in the scale of spaces $\mathcal{K}_{\alpha_t}$. It is
clear from the proof that all such spaces are contained in
$\mathcal{K}_\alpha$, that is, the mentioned theorem can be
formulated similarly as Theorem~\ref{1tm}.
\end{remark}
Another our remark addresses the regularity of the solutions $k_t$.
Instead of (\ref{2D}) let us consider
\begin{equation}
 \label{6D}
\widetilde{\mathcal{K}}_\alpha = \{k \in C (\Gamma_0\rightarrow \mathbb{R}): \|k\|_\alpha < \infty\},
\end{equation}
where this time
\begin{equation}
 \label{7D}
 \|k\|_\alpha = {\sup}\left\{ |k(\eta)| \exp(\alpha |\eta|): \eta \in \Gamma_0\right\}.
\end{equation}
\begin{remark}
 \label{o1rk}
Let $\alpha_0$, $\alpha$, and $T(\alpha_0,\alpha)$ be as in
Theorem~\ref{otm}. Suppose in addition that the function $\phi$ is
continuous. Then the problem (\ref{20A})
 with $k_0\in \widetilde{\mathcal{K}}_{\alpha_0}$
has a unique classical solution $k_t \in
\widetilde{\mathcal{K}}_{\alpha_t}$ with $t\in [0, T(\alpha_0,
\alpha))$, where $\alpha_t$ is the same as in Theorem~\ref{otm}.
\end{remark}
Now we show that the evolution of $k_t$ obtained in
Theorems~\ref{1tm} and~\ref{otm} can be continued in time to the
whole $\mathbb{R}_{+}$. Recall that the space $\mathcal{K}_\alpha$
was defined in (\ref{2D}) and $\varkappa> 0$ is the birth activity
parameter, see (\ref{R20}). Set
\begin{equation}
 \label{9D}
\alpha_\varkappa = - \ln \varkappa, \qquad K_\varkappa := \{ k \in
\mathcal{K}_{\alpha_\varkappa} :
 \|k\|_{\alpha_\varkappa} \leq 1\}.
\end{equation}
\begin{theorem}
 \label{2tm}
The solution of the problem (\ref{20A}) with $k_0 \in K_\varkappa$
can be continued to any positive $t$. Moreover, for every $t\geq 0$, the solution $k_t$
is also in ${K}_{_\varkappa}$.
\end{theorem}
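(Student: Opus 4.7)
The plan is to combine the local result of Theorem \ref{otm} with an a priori bound that is then iterated. Fix some $\alpha<\alpha_\varkappa$ and set $T_1:=T(\alpha_\varkappa,\alpha)>0$. Applying Theorem \ref{otm} with $\alpha_0=\alpha_\varkappa$, the initial datum $k_0\in K_\varkappa\subset\mathcal{K}_{\alpha_\varkappa}$ yields a unique classical solution $k_t$ of (\ref{20A}) on $[0,T_1)$; moreover $k_t=k_{\mu_t}$ is the correlation function of a probability measure $\mu_t\in\mathcal{M}^1_{\rm fm}(\Gamma)$, whence $k_t\ge 0$ almost everywhere.

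The heart of the proof is the a priori bound $k_t(\eta)\le\varkappa^{|\eta|}$ on $[0,T_1)$. The starting observation is that $\varkappa^{|\cdot|}$ is a supersolution of (\ref{20A}). Indeed, substituting $k(\eta)=\varkappa^{|\eta|}$ into (\ref{21A}) and using
\begin{equation*}
\int_{\Gamma_0}e(t_x,\xi)\varkappa^{|\xi|}\lambda(d\xi)=\exp\!\bigl(\varkappa\!\int_{\mathbb{R}^d}t_x(y)\,dy\bigr)=e^{-\varkappa c_\phi}\le 1
\end{equation*}
together with the pointwise estimate $e(\tau_x,\eta\setminus x)\le 1$ (since $\tau_x\le 1$), one obtains
\begin{equation*}
(L^\Delta\varkappa^{|\cdot|})(\eta)=\varkappa^{|\eta|}\Bigl[-|\eta|+e^{-\varkappa c_\phi}\sum_{x\in\eta}e(\tau_x,\eta\setminus x)\Bigr]\le 0.
\end{equation*}
Hence the function $v_t:=\varkappa^{|\cdot|}-k_t$ satisfies $v_0\ge 0$ and, formally, $\partial_t v_t=L^\Delta v_t+g$ with $g:=-L^\Delta\varkappa^{|\cdot|}\ge 0$. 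To pass from this differential inequality to the pointwise bound $v_t\ge 0$, I would lean on the probabilistic construction underlying Theorem \ref{otm} (from \cite[Theorem 2.13]{Nancy}): couple the Markov process associated with $L$ with the ``free'' birth-and-death process obtained by taking $\phi\equiv 0$, for which the birth rate is the constant $\varkappa$ and for which $\pi_\varkappa$ (with correlation function $\varkappa^{|\cdot|}$) is stationary. Under the natural thinning coupling (every $\phi\equiv 0$ birth is accepted by the interacting process with probability $e^{-E^\phi(x,\gamma)}$), the interacting configuration lies pathwise inside the free one at every time, and monotonicity of factorial moment densities then forces $k_{\mu_t}\le\varkappa^{|\cdot|}$.

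Once the a priori bound is in hand on $[0,T_1)$, global continuation is a routine bootstrap. For any $t_0\in(0,T_1)$ we have $k_{t_0}\in K_\varkappa\subset\mathcal{K}_{\alpha_\varkappa}$, so Theorem \ref{otm} may be restarted from $k_{t_0}$ to produce a classical solution on $[t_0,t_0+T_1)$; uniqueness glues it to the previous branch, and the supersolution argument is rerun to give the bound on the new interval. Letting $t_0\uparrow T_1$ extends the solution to $[0,2T_1)$, and iterating reaches any prescribed $t\ge 0$ in finitely many steps, with $k_t\in K_\varkappa$ throughout.

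The main obstacle is precisely the monotonicity step converting the supersolution inequality $L^\Delta\varkappa^{|\cdot|}\le 0$ into the pointwise bound $k_t\le\varkappa^{|\cdot|}$. Direct sign analysis of $L^\Delta$ is inconclusive because the alternating factor $e(t_x,\xi)$ has variable sign, so the Banach-space iteration of Theorem \ref{otm} does not obviously preserve the inequality. Bypassing this via the probabilistic coupling described above---which crucially uses that $k_t$ is a genuine factorial moment density of a probability measure rather than merely an abstract solution in $\mathcal{K}_{\alpha_t}$---is the cleanest route.
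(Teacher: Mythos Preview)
Your overall architecture --- local existence via Theorem~\ref{otm}, an a priori bound $k_t\le\varkappa^{|\cdot|}$ on the local interval, then a restart/bootstrap --- is exactly the paper's scheme (the a priori bound is Lemma~\ref{alm}, the bootstrap is the proof of Theorem~\ref{2tm}).

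Where you diverge is in the a priori bound, and there your argument has a genuine gap. The thinning coupling, with both processes started from the \emph{same} law $\mu_0$, gives the pathwise inclusion and hence $k_{\mu_t}\le k_t^{\text{free}}$, where $k_t^{\text{free}}$ is the correlation function of the free ($\phi\equiv 0$) birth--death process started from $\mu_0$. You then invoke only the stationarity of $\pi_\varkappa$ to conclude $k_t^{\text{free}}\le\varkappa^{|\cdot|}$. But stationarity says nothing about a non-stationary initial law: you would need either that $\mu_0$ is stochastically dominated by $\pi_\varkappa$ (which does \emph{not} follow from the correlation-function inequality $k_0\le\varkappa^{|\cdot|}$ --- factorial-moment ordering is strictly weaker than stochastic ordering of point processes), or a separate proof that the free hierarchy preserves the bound $k_t^{\text{free}}\le\varkappa^{|\cdot|}$. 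The latter is true, but proving it means solving
\[
\frac{d}{dt}k_t^{\text{free}}(\eta)= -|\eta|\,k_t^{\text{free}}(\eta)+\varkappa\sum_{x\in\eta}k_t^{\text{free}}(\eta\setminus x)
\]
and running an induction on $|\eta|$ --- and that is precisely the paper's computation.

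The paper bypasses the coupling and gets the same differential \emph{inequality} directly for $k_t$ itself: it pairs $k_t$ with a nonnegative $G\in B_{\rm bs}(\Gamma_0)$, passes to the $\mu_t$-side of the duality where the generator is $L$ (not $L^\Delta$), and uses $e^{-E^\phi(x,\gamma)}\le 1$ to drop the interaction in the birth term. This yields
\[
\frac{d}{dt}k_t(\eta)\le -|\eta|\,k_t(\eta)+\varkappa\sum_{x\in\eta}k_t(\eta\setminus x),
\]
hence the integral inequality (\ref{16D}), and then the induction on $|\eta|$ closes. So the ``comparison with the free process'' is done at the level of the hierarchy rather than pathwise, which sidesteps the stochastic-domination issue entirely. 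Your supersolution observation $L^\Delta\varkappa^{|\cdot|}\le 0$ is correct but, as you note yourself, not usable without a comparison principle for $L^\Delta$; the paper never uses it.
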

Note that a correlation function $k$ is in $K_\varkappa$ if and only if $k(\eta) \leq \varkappa^{|\eta|}$ for $\lambda$-a.a. $\eta$.
Thus, we state that $k_0(\eta) \leq \varkappa^{|\eta|}$ implies
$k_t(\eta) \leq \varkappa^{|\eta|}$ for all $t\geq 0$.

The proof of Theorem~\ref{2tm} is based on the following estimate.
\begin{lemma}
 \label{alm}
Suppose that $k_0\in K_\varkappa$. For $\alpha < \alpha_\varkappa$,
let $k_t$ be the solution described by Theorem~\ref{otm}. Then, for
all $t \in [0, T(\alpha_\varkappa ,\alpha))$, we have that $k_t$ is
also in $K_\varkappa$.
\end{lemma}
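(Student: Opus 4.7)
The plan is to combine the supersolution property of $\varkappa^{|\cdot|}$ with the probabilistic representation of $k_t$ provided by Theorem~\ref{otm}. The first step is to verify that $\varkappa^{|\cdot|}$ is a supersolution of (\ref{20A}). Substituting $k=\varkappa^{|\cdot|}$ into (\ref{21A}) and evaluating the $\xi$-integral via (\ref{5A}) (using $\int t_x(y)\,dy = -c_\phi$, so that $\int_{\Gamma_0} e(\varkappa t_x,\xi)\,\lambda(d\xi) = e^{-\varkappa c_\phi}$), I obtain
\begin{equation*}
(L^\Delta \varkappa^{|\cdot|})(\eta) \;=\; \varkappa^{|\eta|}\Bigl(-|\eta| + e^{-\varkappa c_\phi}\sum_{x\in\eta} e(\tau_x,\eta\setminus x)\Bigr) \;\leq\; 0,
\end{equation*}
since $e^{-\varkappa c_\phi}\leq 1$ and each factor $e(\tau_x,\eta\setminus x)\leq 1$ (as $0\leq \tau_x\leq 1$).

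Writing $v_t := \varkappa^{|\cdot|} - k_t$, linearity of $L^\Delta$ yields the inhomogeneous Cauchy problem $\dot v_t = L^\Delta v_t + F$ with $v_0\geq 0$ and $F := -L^\Delta \varkappa^{|\cdot|}\geq 0$; the lemma reduces to showing $v_t\geq 0$. To establish this, I would exploit the probabilistic representation from Theorem~\ref{otm}, according to which $k_t = k_{\mu_t}$ for the time-$t$ law $\mu_t$ of the Markov process $X_t^{\mu_0}$ from \cite{Nancy}. The birth rate $\varkappa e^{-E^\phi(x,\gamma)}$ of this process is uniformly dominated by the constant rate $\varkappa$ of the free (non-interacting) birth-death dynamics $Z_t$, for which the Poisson measure $\pi_\varkappa$ (with correlation function $\varkappa^{|\cdot|}$) is stationary. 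A standard thinning coupling -- accepting each candidate birth of $Z_t$ at $x$ for $X_t$ with probability $e^{-E^\phi(x,X_t)}$ and synchronizing the deaths of common particles -- preserves $X_t\subseteq Z_t$ for all $t$. Starting from a coupling with $X_0\sim\mu_0$, $Z_0\sim\pi_\varkappa$ and $X_0\subseteq Z_0$ (available since Ruelle-bounded point processes are stochastically dominated by Poisson), the inclusion at time $t$ translates via (\ref{9A}) into $k_{\mu_t}(\eta)\leq k_{\pi_\varkappa}(\eta) = \varkappa^{|\eta|}$ for $\lambda$-a.e.\ $\eta$, which is the desired bound.

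The main obstacle is that a direct pointwise or iterative proof seems blocked by the oscillating sign of the kernel $e(t_x,\xi)$ in (\ref{21A}): using the two-sided bound $0\leq k_t\leq\varkappa^{|\cdot|}$ inside the $\xi$-integral gives at best the crude estimate $\varkappa^{|\eta|-1}\cosh(\varkappa c_\phi)$, which is strictly larger than the sharp value $\varkappa^{|\eta|-1}e^{-\varkappa c_\phi}$ appearing in the supersolution computation above; consequently, a first-touching-time argument at the boundary $k_t=\varkappa^{|\cdot|}$ cannot be closed by pointwise bounds alone. The sharper $e^{-\varkappa c_\phi}$ cancellation is a global phenomenon naturally captured by the thinning coupling. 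The technical work therefore lies in (i) making that coupling rigorous on the full configuration space $\Gamma$ by appealing to the path-space construction of \cite{Nancy}, and (ii) verifying the initial stochastic domination $\mu_0\preceq\pi_\varkappa$ from the hypothesis $k_0\in K_\varkappa$; alternatively, one may try to mimic the coupling analytically by iterating (\ref{3D}) while insisting that each iterate be a correlation function of a genuine probability measure, so that the oscillating signs in (\ref{21A}) cancel through the global combinatorics of a measure rather than only through pointwise estimates.
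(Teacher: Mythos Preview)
Your supersolution computation $L^\Delta \varkappa^{|\cdot|}\leq 0$ is correct, but the coupling route has a genuine gap at step (ii): the claim that ``Ruelle-bounded point processes are stochastically dominated by Poisson'' is false. The hypothesis $k_{\mu_0}(\eta)\leq\varkappa^{|\eta|}$ controls all factorial moments, but factorial-moment domination does not imply stochastic domination in the Strassen sense needed for an inclusion coupling $X_0\subseteq Z_0$. A concrete counterexample on $[0,1]$ with $\varkappa=1$: let $\mu_0$ be empty with probability $1/2$ and consist of two i.i.d.\ uniform points with probability $1/2$. Then $k^{(1)}\equiv 1$, $k^{(2)}\equiv 1$, and $k^{(n)}\equiv 0$ for $n\geq 3$, so $k_{\mu_0}\in K_1$; yet $\mu_0\{\,|\gamma|\geq 2\,\}=1/2>1-2e^{-1}=\pi_1\{\,|\gamma|\geq 2\,\}$, so $\mu_0\not\preceq\pi_1$. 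Without the initial inclusion the thinning coupling cannot be started, and there is no evident repair within your framework.

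The paper's proof is both different and simpler, and in particular it shows that your diagnosis ``a pointwise or iterative proof seems blocked'' is premature. The device that dissolves the oscillating-sign obstacle is to pass from $L^\Delta$ back to the original generator $L$ via the duality
\[
\langle \! \langle G,k_t\rangle \! \rangle \;=\;\int_\Gamma (KG)(\gamma)\,\mu_t(d\gamma),
\]
available precisely because Theorem~\ref{otm} furnishes $k_t=k_{\mu_t}$. Differentiating and using (\ref{R20}) puts the birth contribution in the form $\varkappa\int e^{-E^\phi(x,\gamma)}\,[\,\cdots\,]\,dx$ with a nonnegative integrand when $G\geq 0$; the crude estimate $e^{-E^\phi}\leq 1$ (no cancellation needed) then yields, for $\lambda$-a.e.\ $\eta$, the hierarchy
\[
\frac{d}{dt}k_t(\eta)\ \leq\ -|\eta|\,k_t(\eta)\;+\;\varkappa\sum_{x\in\eta}k_t(\eta\setminus x).
\]
This is the correlation hierarchy of the \emph{free} birth--death dynamics, and it closes by induction on $|\eta|$: from its integral form and the assumption $k_s(\xi)\leq\varkappa^{|\xi|}$ for $|\xi|=n-1$, $s\in[0,t]$, together with $k_0(\eta)\leq\varkappa^{|\eta|}$, one gets $k_t(\eta)\leq\varkappa^{|\eta|}$ for $|\eta|=n$. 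Thus the comparison with the free dynamics that you tried to realise pathwise via coupling is carried out analytically on the $L$-side, where all terms are manifestly signed; the probabilistic input from \cite{Nancy} is used only to justify the identity $k_t=k_{\mu_t}$, not to build a coupling.
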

\begin{proof}
For $G\in B_{\rm bs}(\Gamma_0)$ and $k_t$ as in Theorem~\ref{otm},
we set
\begin{equation}
 \label{12D}
 \varphi (t) = \langle \! \langle G, k_t \rangle \! \rangle = \int_{\Gamma} (KG) (\gamma) \mu_t (d \gamma),
\end{equation}
where $\mu_t$ is the corresponding state. According to (\ref{R20}), we have that
\begin{align}
 \label{13D}
 \frac{d}{d t} \varphi (t)  = & \int_{\Gamma} (LKG) (\gamma) \mu_t (d \gamma) \nonumber \\ = &
\int_{\Gamma} \sum_{x\in \gamma}
 \left[(K G) (\gamma \setminus x) - (K G) (\gamma) \right] \mu_t (d \gamma) \\
& + \varkappa \int_{\Gamma}\int_{\mathbb{R}^d}
 \left[(KG) (\gamma \cup x) - (K G) (\gamma) \right] \exp\left( - E^\phi (x , \gamma)\right)\mu_t (d \gamma).\nonumber
\end{align}
By (\ref{7A}),
\begin{align*}
(K G) (\gamma \setminus x) - (K G) (\gamma)&= \sum_{\xi \Subset
\gamma \setminus x} G (\xi) - \sum_{\xi \Subset \gamma \setminus x}
G (\xi) - \sum_{\xi \Subset \gamma \setminus x} G (\xi\cup x) \\&= -
\sum_{\xi \Subset \gamma \setminus x} G (\xi\cup x),
\end{align*}
and
\[
  (KG) (\gamma \cup x) - (K G) (\gamma)= \sum_{\xi \Subset \gamma}
G(\xi\cup x) + \sum_{\xi \Subset \gamma} G(\xi) - \sum_{\xi \Subset
\gamma} G(\xi)= \sum_{\xi \Subset \gamma} G(\xi \cup x ).
\]
Here all sums are finite since $G\in B_{\rm bs}(\Gamma_0)$.

Applying these equalities in (\ref{13D}) together with the following
\[
\sum_{x\in \gamma} \sum_{\xi \Subset \gamma\setminus x} G (\xi \cup
x) = \sum_{\xi \Subset \gamma} \sum_{x \in \xi} G(\xi) = \sum_{\xi
\Subset \gamma} |\xi| G (\xi),
\]
we arrive at
\begin{align}
 \label{14D}
\frac{d}{d t} \varphi (t) =& - \int_{\Gamma_0}|\eta| G(\eta) k_t
(\eta) \lambda (d \eta) \\&+ \varkappa \int_{\Gamma}
\left(\int_{\mathbb{R}^d} \sum_{\xi \Subset \gamma} G (\xi\cup x) d
x \right) \exp\left( - E^\phi (x , \gamma)\right)\mu_t (d \gamma).
\nonumber
\end{align}
Assume now that $G$ is positive as an element of $L^1 (\Gamma_0,
d\lambda)$. As $\phi$ in (\ref{14A}) is also positive, the second
line in (\ref{14D}) can be estimated
\begin{align*}
 &  \,\varkappa \int_{\Gamma}
 \left(\int_{\mathbb{R}^d} \sum_{\xi \Subset \gamma} G (\xi\cup x) d x \right)
\exp\left( - E^\phi (x , \gamma)\right)\mu_t (d \gamma) \\ \leq & \,
\varkappa \int_{\mathbb{R}^d}\left( \int_{\Gamma}
 \sum_{\xi \Subset \gamma} G (\xi\cup x) \mu_t (d \gamma) \right) d x =\varkappa \int_{\mathbb{R}^d}\left( \int_{\Gamma_0}G (\eta\cup x) k_t (\eta)\lambda (d \eta)\right) d x
\\   =& \,\varkappa \int_{\Gamma_0}G (\eta) \sum_{x \in \eta} k_t (\eta \setminus x) \lambda (d \eta).
\end{align*}
In obtaining the latter equality we have used (\ref{12A}). Applying
this estimate in (\ref{14D}) we arrive at
\begin{equation}
 \label{15D}
\frac{d}{dt} k_t (\eta) \leq - |\eta| k_t (\eta) + \varkappa \sum_{x
\in \eta} k_t (\eta \setminus x),
\end{equation}
which holds for $\lambda$-almost all $\eta \in \Gamma_0$. By standard methods we get from this
\begin{equation}
 \label{16D}
k_t (\eta) \leq k_0 (\eta) e^{-|\eta|t} + \varkappa \int_0^t e^{- (t-s) |\eta|}\sum_{x \in \eta} k_s (\eta \setminus x)ds.
\end{equation}
As a correlation function, $k_t (\eta) \geq 0$ for $\lambda$-almost
all $\eta$. For a fixed $t$ and $|\eta|= n$, assume that
$k_s(\xi)\leq \varkappa^{|\xi|}$ for all $s\in [0,t]$ and all $|\xi|
= n-1$. As we also assume $k_0(\eta)\leq \varkappa^{|\eta|}$, by
(\ref{16D}) $k_s(\eta)\leq \varkappa^{|\eta|}$, also for all $s\in
[0,t]$. Thus, $k_t \in K_\varkappa$.
\end{proof}
\begin{proof}[Proof of Theorem~\ref{2tm}.] Take any $\alpha<
\alpha_\varkappa$. Then the solution $k_t$ exists in
$\mathcal{K}_\alpha$ for $t\in [0, T (\alpha_\varkappa,\alpha))$. If
$k_0$ is in $K_\varkappa$, by Lemma~\ref{alm} $k_t$ is also in
$K_\varkappa$. We take any $\tau \in [0, T (\alpha_\varkappa,
\alpha))$ and consider the problem (\ref{20A}) for $\tilde{k}_t$
with the initial condition $\tilde{k}_0 = k_\tau \in K_\varkappa$.
This gives the continuation in question to $[0, \tau+ T
(\alpha_\varkappa, \alpha))$. Then we repeat the above arguments.
\end{proof}

\section{The mesoscopic description}

The mesoscopic description of the dynamics of our model will be
conducted in the Vlasov scaling framework, see~\cite{DimaN} where
the detailed presentation of this approach and the most updated
related bibliography can be found. Our program in this section is as
follows:
\begin{itemize}
 \item Derive the Vlasov hierarchy as the (Vlasov) scaling limit of (\ref{20A}) and prove the existence of its solutions, see (\ref{E12}) and Proposition~\ref{Epn}.
\item Then derive the Vlasov equation from the Vlasov hierarchy and prove the existence of its solutions, see (\ref{E14}) and Theorem~\ref{Vatm}.
\item Prove the convergence of the rescaled correlation functions to the solutions of the Vlasov hierarchy, see Theorem~\ref{3tm}.
\end{itemize}

\subsection{The Vlasov hierarchy}

In the Vlasov scaling limit, which is achieved by letting
$\varepsilon \rightarrow 0$, the particle density is supposed to
diverge whereas the interaction gets weak and of long range. Thus,
we assume that the correlation function $k^{(\varepsilon)}$ of the
particle system we consider depends of the scaling parameter and
diverges in such a way that the renormalized function
\begin{equation}
\label{E1}
 k^{(\varepsilon)}_{\rm ren} (\eta) = \varepsilon^{|\eta|} k^{(\varepsilon)}(\eta),
\end{equation}
has a finite limit, which we denote by $r$. The evolution of yet not rescaled
functions is described by the equation (\ref{20A}) in which the generator $L^\Delta_\varepsilon$
contains $\varepsilon \phi$ and $\varepsilon^{-1} \varkappa$ in place of $\phi$ and $\varkappa$, respectively.
Thus, the evolution
of the rescaled functions (\ref{E1}) is described by the equation
\begin{equation}
\label{E2}
 \frac{d}{dt} k^{(\varepsilon)}_{t,\rm ren} = L_{\varepsilon, {\rm ren}} k^{(\varepsilon)}_{t,\rm ren},
\qquad k^{(\varepsilon)}_{t,\rm ren}|_{t=0} = k^{(\varepsilon)}_{0,\rm ren},
\end{equation}
where
\begin{equation}
 \label{E3}
L_{\varepsilon, {\rm ren}} = R_{\varepsilon}L_{\varepsilon}^\Delta
R_{\varepsilon}^{-1},
 \qquad \left( R_{\varepsilon}k\right) (\eta) := \varepsilon^{|\eta|} k(\eta).
\end{equation}
By (\ref{21A}), we thus have
\begin{align}
 \label{E4}
\left( L_{\varepsilon, {\rm ren}} k\right) (\eta) = & - |\eta|
k(\eta) \\  & +  \varkappa \sum_{x\in \eta}
e(\tau^{(\varepsilon)}_x, \eta \setminus x) \int_{\Gamma_0}
e(\varepsilon^{-1}t^{(\varepsilon)}_x, \xi) k(\eta \setminus x \cup
\xi) \lambda (d\xi), \nonumber
\end{align}
with
\begin{equation}
 \label{E5}
\tau_x^{(\varepsilon)}(y) := \exp\left[ - \varepsilon \phi(x-y)\right], \qquad t^{(\varepsilon)}_x := \tau^{(\varepsilon)}_x
-1.
\end{equation}
Note that, for small enough $\varepsilon$, $R_\varepsilon^{-1} k$
might not be a correlation function, even if $k$ is, see (\ref{E3}).

For any $\alpha_0$, $\alpha'$, $\alpha''$, and $\alpha$ such that $\alpha < \alpha' < \alpha''< \alpha_0$, as in (\ref{o7}) we get
\begin{equation}
 \label{E6}
\|L_{\varepsilon, {\rm ren}} \|_{\alpha'' \alpha'} \leq
\frac{1}{(\alpha'' - \alpha')e} \left[1 + \varkappa \exp\left(
\alpha_0 + c_\phi^{(\varepsilon)} e^{-\alpha} \right) \right],
\end{equation}
where, c.f. (\ref{13A}),
\begin{equation}
 \label{E7}
 c_\phi^{(\varepsilon)} = \varepsilon^{-1} \int_{\mathbb{R}^d} \left(1 - e^{-\varepsilon \phi(x)} \right)dx.
\end{equation}
Suppose now that $\phi$ is in $L^1(\mathbb{R}^d)$ and set
\begin{equation}
 \label{E8}
 \langle \phi \rangle = \int_{\mathbb{R}^d} \phi(x) dx.
\end{equation}
Recall that we still assume $\phi \geq 0$. Then
\begin{align}
 \label{E9}
 \|L_{\varepsilon, {\rm ren}} \|_{\alpha'' \alpha'} & \leq \sup_{\varepsilon >0} \{{\rm RHS(\ref{E6})}\} \\  & =
 \frac{1}{(\alpha'' - \alpha')e}
\left[1 + \varkappa \exp\left( \alpha_0 + \langle \phi \rangle e^{-\alpha} \right) \right]. \nonumber
\end{align}
Now let us informally pass in (\ref{E4}) to the limit $\varepsilon
\rightarrow 0$. We then obtain the following operator
\begin{align}
 \label{E10} \quad
\left( L_V k\right) (\eta)  = & - |\eta| k(\eta) \\  & +  \varkappa
\sum_{x\in \eta} \int_{\Gamma_0} e(- \phi(x-\cdot), \xi) k(\eta
\setminus x \cup \xi) \lambda (d\xi). \nonumber
\end{align}
It certainly obeys
\begin{equation}
 \label{E11}
 \|L_V \|_{\alpha'' \alpha'} \leq \frac{1}{(\alpha'' - \alpha')e}
\left[1 + \varkappa \exp\left( \alpha_0 + \langle \phi \rangle e^{-\alpha} \right) \right],
\end{equation}
 and hence along with the problem (\ref{E2}) we can consider
 \begin{equation}
 \label{E12}
 \frac{d}{dt} r_t = L_V r_t, \qquad r_t|_{t=0} = r_0,
 \end{equation}
which is called the {\it Vlasov hierarchy} for the Glauber dynamic we consider.
Set, c.f. (\ref{7}),
\begin{equation}
 \label{7QQ}
\widetilde{T}(\alpha_0, \alpha) := \frac{\alpha_0 - \alpha}
{1 + \varkappa \exp\left(\alpha_0 + \langle \phi \rangle e^{-\alpha} \right)} \leq {T}(\alpha_0, \alpha).
\end{equation}
The latter inequality holds as $c_\phi^{(\varepsilon)} \leq \langle
\phi \rangle$, see (\ref{E7}) and (\ref{E8}). Repeating the
arguments used in the proof of Theorem~\ref{otm} we obtain the
following
\begin{proposition}
 \label{Epn}
Let $\phi$, $\alpha_0$, $\alpha$, be as in Theorem~\ref{otm} and
$\widetilde{T}(\alpha_0,\alpha)$ be as in (\ref{7QQ}). Then the
problem (\ref{E2}) (resp. (\ref{E12})) with any $\varepsilon >0$ and
$k_{0, {\rm ren}}^{(\varepsilon)}\in \mathcal{K}_{\alpha_0}$ (resp.
$r_0 \in \mathcal{K}_{\alpha_0}$) has a unique classical solution
$k_{t, {\rm ren}}^{(\varepsilon)} \in \mathcal{K}_{\alpha}$ (resp.
$r_t \in \mathcal{K}_{\alpha}$) with $t\in [0,
\widetilde{T}(\alpha_0, \alpha))$.
\end{proposition}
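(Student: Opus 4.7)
The plan is to follow the proof of Theorem~\ref{otm} essentially verbatim, exploiting the fact that the norm bounds (\ref{E9}) and (\ref{E11}) have precisely the same structure as (\ref{o7}), but with $c_\phi$ replaced by $\langle \phi \rangle$. The crucial point is that the bound (\ref{E9}) on $\|L_{\varepsilon,\rm ren}\|_{\alpha''\alpha'}$ is \emph{uniform in} $\varepsilon>0$ because $c_\phi^{(\varepsilon)}\le \langle \phi \rangle$, and the limiting operator $L_V$ satisfies the very same bound (\ref{E11}). Consequently, all the estimates below apply without change to both problems, and both are solved on the common time interval $[0,\widetilde T(\alpha_0,\alpha))$.

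First I would construct the Picard iterations
\begin{equation*}
r_t^{(n)} = r_0 + \int_0^t L_V r_s^{(n-1)}\, ds, \qquad r_t^{(0)}=r_0,
\end{equation*}
and analogously $k_{t,\rm ren}^{(\varepsilon),(n)}$ for the rescaled problem (\ref{E2}). Iterating yields, as in (\ref{4D}),
\begin{equation*}
r_t^{(n)} = r_0 + \sum_{m=1}^{n}\frac{t^m}{m!}\, L_V^m r_0,
\end{equation*}
with the analogous formula for $k_{t,\rm ren}^{(\varepsilon),(n)}$. Next, following the computation that led from (\ref{o7}) to (\ref{5D}), I would partition $[\alpha,\alpha_0]$ into $m$ equal steps of size $\epsilon=(\alpha_0-\tilde\alpha)/m$, apply the bound (\ref{E11}) to each step, and obtain
\begin{equation*}
\|L_V^m\|_{\alpha_0\tilde\alpha} \le \left(\frac{m}{e}\right)^m \left(\frac{\alpha_0-\alpha}{\alpha_0-\tilde\alpha}\right)^m \frac{1}{[\widetilde T(\alpha_0,\alpha)]^m},
\end{equation*}
with the identical bound for $L_{\varepsilon,\rm ren}^m$ uniformly in $\varepsilon>0$, thanks to (\ref{E9}).

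Given $t\in[0,\widetilde T(\alpha_0,\alpha))$, I would pick $\delta>0$ with $t+\delta<\widetilde T(\alpha_0,\alpha)$ and choose $\alpha_t\in(\alpha,\alpha_0)$ exactly as in (\ref{alpha}), so that the previous estimate reduces to
\begin{equation*}
\|L_V^m\|_{\alpha_0\alpha_t} \le \left(\frac{m}{(t+\delta)e}\right)^m.
\end{equation*}
Combined with Stirling, this gives absolute convergence of the iteration in $\mathcal{K}_{\alpha_t}$, uniformly on $[0,t]$, as well as convergence of the termwise differentiated series; the limit $r_t$ lies in $\mathcal{K}_{\alpha_t}\subset\mathcal{K}_{\alpha}$ and solves (\ref{E12}) classically in $\mathcal{K}_\alpha$. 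The same argument with the same $\alpha_t$ produces the classical solution $k_{t,\rm ren}^{(\varepsilon)}$ of (\ref{E2}) in $\mathcal{K}_\alpha$ for every $\varepsilon>0$.

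Uniqueness in each case is obtained by the standard Ovsjannikov-type argument used in the proof of Theorem~\ref{1tm} (see \cite[pp.~16,17]{trev}): any two classical solutions with the same initial datum satisfy an estimate of the form $\|r_t^{(1)}-r_t^{(2)}\|_{\alpha}\le \|L_V^m\|_{\alpha_0\alpha}\,t^m/m!\,\|r_0^{(1)}-r_0^{(2)}\|_{\alpha_0}$ after $m$-fold iteration, and the right-hand side tends to $0$ on $[0,\widetilde T(\alpha_0,\alpha))$. There is no real obstacle here; the only point that requires care is that the estimates for the family $\{L_{\varepsilon,\rm ren}\}_{\varepsilon>0}$ must be made uniform in $\varepsilon$ (which is exactly why $\widetilde T$ uses $\langle\phi\rangle$ rather than $c_\phi^{(\varepsilon)}$), since this uniformity is what subsequently enables Theorem~\ref{3tm}.
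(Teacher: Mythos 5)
Your proposal is correct and follows exactly the route the paper takes: the paper's proof of Proposition~\ref{Epn} is precisely ``repeat the argument of Theorem~\ref{otm}'' with the bounds (\ref{E9}), (\ref{E11}) (uniform in $\varepsilon$ since $c_\phi^{(\varepsilon)}\leq\langle\phi\rangle$) replacing (\ref{o7}) and with $\widetilde{T}(\alpha_0,\alpha)$ in place of $T(\alpha_0,\alpha)$, which is what you do, including the choice of $\alpha_t$ as in (\ref{alpha}) and the Ovsjannikov-type uniqueness argument from Theorem~\ref{1tm}.
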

Note that the passage from (\ref{E4}) to (\ref{E10}) was only
`informal', so we have no information how `close' is $r_t$ to $k_{t,
{\rm ren}}^{(\varepsilon)}$. Another observation is that (\ref{E12})
has a very special solution, which we obtain now.

\subsection{The Vlasov equation}

For the potential $\phi$ and an appropriate function $g$, we write
\begin{equation}
 \label{E13}
 (\phi \ast g) (x) = \int_{\mathbb{R}^d} \phi(x-y) g(y) dy.
\end{equation}
Let us consider in $L^\infty (\mathbb{R}^d)$ the following problem,
c.f.~\cite[Example 8]{DimaN},
\begin{equation}
 \label{E14}
\frac{d}{dt}\varrho_t(x) = - \varrho_t (x) + \varkappa \exp\left( - (\phi \ast \varrho_t) (x)\right), \quad \varrho_t|_{t=0} = \varrho_0.
\end{equation}
Given $\alpha \in \mathbb{R}$, we denote
\begin{align}
 \label{Va1}
 \varDelta_\alpha & =  \{ \varrho \in L^\infty (\mathbb{R}^d) : \|\varrho \|_{L^\infty(\mathbb{R}^d)} \leq e^{-\alpha}\}, \\
 \varDelta^{+}_\alpha & =  \{ \varrho \in \varDelta_\alpha : \varrho (x) \geq 0 \quad {\rm a. e}\}. \nonumber
\end{align}
\begin{lemma}
 \label{Vlm}
Suppose that, for some $\alpha_0\in \mathbb{R}$ and $T>0$, the
problem (\ref{E14}) with $\varrho_0 \in \varDelta^{+}_{\alpha_0}$
has a unique classical solution $\varrho_t\in
\varDelta^{+}_{\alpha_0}$ on the time interval $[0,T)$. Then the
solution $r_t\in \mathcal{K}_{\alpha}$, $\alpha < \alpha_0$, of the
problem (\ref{E12}), as in Proposition~\ref{Epn}, with $r_0 (\eta )
= e(\varrho_0 , \eta)\in \mathcal{K}_{\alpha_0}$ has the form
\begin{equation}
 \label{C23}
r_t (\eta) = e(\varrho_t, \eta)= \prod_{x\in \eta} \varrho_t(x),
\end{equation}
and hence remains in $\mathcal{K}_{\alpha_0}$.
\end{lemma}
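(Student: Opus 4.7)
The plan is to verify directly that the ansatz $r_t(\eta):=e(\varrho_t,\eta)$ solves the Vlasov hierarchy~(\ref{E12}) in $\mathcal{K}_{\alpha_0}$, and then invoke the uniqueness part of Proposition~\ref{Epn} to conclude that it coincides with the solution $r_t\in\mathcal{K}_\alpha$ provided there.

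First I would check that the ansatz lives in the correct Banach space: since $\varrho_t\in\varDelta^+_{\alpha_0}$ means $0\le\varrho_t(x)\le e^{-\alpha_0}$ a.e., one has $0\le e(\varrho_t,\eta)\le e^{-\alpha_0|\eta|}$ for $\lambda$-a.a.\ $\eta$, hence $\|r_t\|_{\alpha_0}\le 1$ uniformly in $t\in[0,T)$, and in particular $r_0=e(\varrho_0,\cdot)\in\mathcal{K}_{\alpha_0}$. This also gives the final conclusion that $r_t$ stays in $\mathcal{K}_{\alpha_0}$.

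The core computation is a direct differentiation combined with the Lebesgue--Poisson exponential identity~(\ref{5A}). Formally,
\begin{align*}
\frac{d}{dt}e(\varrho_t,\eta)
&=\sum_{x\in\eta}\dot\varrho_t(x)\prod_{y\in\eta\setminus x}\varrho_t(y)\\
&=\sum_{x\in\eta}\Bigl[-\varrho_t(x)+\varkappa\exp\bigl(-(\phi*\varrho_t)(x)\bigr)\Bigr]\,e(\varrho_t,\eta\setminus x)\\
&=-|\eta|\,e(\varrho_t,\eta)+\varkappa\sum_{x\in\eta}e(\varrho_t,\eta\setminus x)\exp\bigl(-(\phi*\varrho_t)(x)\bigr),
\end{align*}
where I used $\varrho_t(x)\,e(\varrho_t,\eta\setminus x)=e(\varrho_t,\eta)$. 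On the other hand, using multiplicativity of $e$, namely $e(-\phi(x-\cdot),\xi)\,e(\varrho_t,\xi)=e(-\phi(x-\cdot)\varrho_t,\xi)$, together with (\ref{5A}),
\begin{align*}
\int_{\Gamma_0}e(-\phi(x-\cdot),\xi)\,e(\varrho_t,\xi)\,\lambda(d\xi)
&=\exp\!\Bigl(-\!\int_{\mathbb{R}^d}\phi(x-y)\varrho_t(y)\,dy\Bigr)\\
&=\exp\bigl(-(\phi*\varrho_t)(x)\bigr),
\end{align*}
which is finite since $\phi\in L^1(\mathbb{R}^d)$ and $\varrho_t\in L^\infty(\mathbb{R}^d)$. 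Factoring $r_t(\eta\setminus x\cup\xi)=e(\varrho_t,\eta\setminus x)\,e(\varrho_t,\xi)$ (valid for $\lambda$-a.a.\ $\xi$, for which $\xi\cap\eta=\emptyset$) in the definition~(\ref{E10}) of $L_V$ then yields exactly the right-hand side obtained above, so $\dot r_t=L_V r_t$ pointwise.

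The one genuinely non-trivial point is justifying that this pointwise identity qualifies $r_t$ as a \emph{classical} solution of~(\ref{E12}) in $\mathcal{K}_\alpha$ for some $\alpha<\alpha_0$; this is where I expect the main (mild) obstacle. Concretely, one must show that $t\mapsto e(\varrho_t,\cdot)$ is differentiable in the norm of $\mathcal{K}_\alpha$. Since $\|\varrho_t\|_\infty\le e^{-\alpha_0}$ and $\dot\varrho_t=-\varrho_t+\varkappa\exp(-\phi*\varrho_t)$ is uniformly bounded on $[0,T)$, a standard estimate gives
\[
\bigl|e(\varrho_{t+h},\eta)-e(\varrho_t,\eta)-h\tfrac{d}{dt}e(\varrho_t,\eta)\bigr|\le C|h|^2|\eta|^2 e^{-\alpha_0(|\eta|-1)},
\]
and the factor $|\eta|^2 e^{-(\alpha_0-\alpha)|\eta|}$ is bounded in $\eta$ whenever $\alpha<\alpha_0$, giving norm differentiability in $\mathcal{K}_\alpha$. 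Once this is in place, $r_t$ is a classical solution of~(\ref{E12}) with initial datum $r_0=e(\varrho_0,\cdot)$, and the uniqueness statement of Proposition~\ref{Epn} on $[0,\widetilde T(\alpha_0,\alpha))$ identifies it with the solution constructed there; the identity~(\ref{C23}) then holds on all of $[0,T)$ by standard continuation, and $r_t$ remains in $\mathcal{K}_{\alpha_0}$ by the uniform bound from the first step.
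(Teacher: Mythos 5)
Your proposal is correct and follows exactly the paper's route: plug the ansatz $\tilde r_t=e(\varrho_t,\cdot)$ into (\ref{E12}), verify via the identity (\ref{5A}) and the Vlasov equation (\ref{E14}) that it solves the hierarchy, and then invoke the uniqueness from Proposition~\ref{Epn} to identify it with $r_t$, with membership in $\mathcal{K}_{\alpha_0}$ following from $\varrho_t\in\varDelta^{+}_{\alpha_0}$. The paper states the differentiation step as "easily checked," while you supply the computation and the norm-differentiability estimate explicitly, which is a harmless (indeed welcome) elaboration of the same argument.
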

\begin{proof}
First of all we note that $e(\varrho, \cdot) \in
\mathcal{K}_{\alpha_0}$ if and only if $\varrho \in
\varDelta_{\alpha_0}$, see (\ref{o5}). Now set $\tilde{r}_t =
e(\varrho_t, \cdot)$ with $\varrho_t$ solving (\ref{E14}). This
$\tilde{r}_t$ solves (\ref{E12}), which can easily be checked by
computing $d/dt$ and employing (\ref{E14}). In view of the
uniqueness as in Proposition~\ref{Epn}, we then have $\tilde{r}_t =
r_t$ on the time interval where both solutions exist.\end{proof}
\begin{remark}
 \label{Vrk}
As
(\ref{C23}) is the correlation function for the Poisson measure $\pi_{\varrho_t}$, see (\ref{3Aa}) and (\ref{3Ab}), the property established by
the above lemma can be called the {\it chaos preservation}. Indeed, the most chaotic state of the system is the free state described
 by a Poisson measure.
\end{remark}
Let us show now that the problem (\ref{E14}) does have the solution
we need (c.f. \cite[Theorem 3.3]{FKK}). In a standard way, this
problem can be transformed into the following integral equation
\begin{equation}
 \label{E15}
\varrho_t (x) = \varrho_0(x) e^{-t} + \varkappa \int_0^t e^{-(t-s)} \exp \left( - (\phi\ast \varrho_s )(x)\right)ds.
\end{equation}
Following classical Picard's scheme we seek the solution as the limit of the iterative sequence $\{\varrho^{(n)}_t\}_{n\in \mathbb{N}_0}$,
defined as
\begin{equation}
 \label{E16}
 \varrho^{(n)}_t (x) = \varrho_0(x) e^{-t} + \varkappa \int_0^t e^{-(t-s)} \exp \left( - (\phi\ast \varrho^{(n-1)}_s )(x)\right)ds, \quad n\in \mathbb{N},
\end{equation}
and $\varrho^{(0)}_t = \varrho_0$. Clearly $\varrho^{(n)}_t \geq 0$ for all $n\in \mathbb{N}_0$. Thus, we have to
show that $\varrho^{(n)}_t (x) \leq e^{-\alpha_0}$, at least for some $t>0$. By the induction over $n$, we see that this holds,
for all $t>0$,
if
\begin{equation}
 \label{Ea16}
\varkappa \leq e^{-\alpha_0}.
\end{equation}
Now let us show that $\{\varrho^{(n)}_t\}_{n\in \mathbb{N}_0}$ is a
Cauchy sequence in $L^\infty(\mathbb{R}^d)$,
 assuming $\varrho^{(n)}_s\in \varDelta_{\alpha_0}$, for all $n\in \mathbb{N}_0$ and $s\leq t$.
From (\ref{E16}), using elementary inequality
\mbox{$|e^{-a}-e^{-b}|\leq|a-b|$} for $a,b\geq0$, we get
\begin{align*}
\| \varrho^{(n)}_t - \varrho^{(n-1)}_t\|_{L^\infty (\mathbb{R}^d)}
&\leq  q(t) \sup_{s\in [0,t]}
 \| \varrho^{(n-1)}_s - \varrho^{(n-2)}_s\|_{L^\infty
 (\mathbb{R}^d)},\\\intertext{where}
 q(t) & :=  \varkappa \langle \phi \rangle \left(1 - e^{-t} \right)
 . 
\end{align*}
Now we take $T>0$ such that $q(T) < 1$. Then the latter estimate
yields
\begin{equation}
 \label{E17}
 \sup_{t\in [0,T]} \| \varrho^{(n)}_t - \varrho^{(n-1)}_t\|_{L^\infty (\mathbb{R}^d)} \leq q(T)
\sup_{t\in [0,T]} \| \varrho^{(n-1)}_t -
\varrho^{(n-2)}_t\|_{L^\infty (\mathbb{R}^d)}
\end{equation}
Therefore, the sequence $\{\varrho^{(n)}_t\}_{n\in \mathbb{N}_0}$
converges in $L^\infty(\mathbb{R}^d)$, uniformly on $[0,T]$. Thus,
its limit is the unique classical solution of (\ref{E14}). Since
this limit is still in $\varDelta_{\alpha_0}^+$, the evolution can
be continued. Taking into account Lemma~\ref{Vlm} we come to the
following conclusion.
\begin{theorem}
 \label{Vatm}
Given $\varkappa>0$, let $\alpha_0$ be as in (\ref{Ea16}). Then the unique classical solution of (\ref{E12}) with $r_0 = e(\varrho_0, \cdot)$, $\varrho_0 \in \varDelta_{\alpha_0}^{+}$,
exists for all $t>0$ and is given by (\ref{C23}) with $\varrho_t \in \varDelta_{\alpha_0}^{+}$ being the solution of (\ref{E14}).
\end{theorem}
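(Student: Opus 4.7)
The plan is to upgrade the local Picard construction for (\ref{E14}) already carried out before the theorem into a \emph{global} existence result, using the a priori bound furnished by the hypothesis (\ref{Ea16}), and then to transfer the outcome to the Vlasov hierarchy (\ref{E12}) via Lemma~\ref{Vlm}.

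First I would confirm that under (\ref{Ea16}) the iterates $\{\varrho^{(n)}_t\}_{n\in\mathbb{N}_0}$ defined by (\ref{E16}) stay in $\varDelta^{+}_{\alpha_0}$ for every $t\ge 0$, not merely for small $t$. Nonnegativity is immediate from the form of (\ref{E16}). For the pointwise upper bound, if $\varrho^{(n-1)}_s(x)\in[0,e^{-\alpha_0}]$ for all $s\ge 0$, then $\exp(-(\phi\ast\varrho^{(n-1)}_s)(x))\le 1$ because $\phi\ge 0$, and (\ref{E16}) together with (\ref{Ea16}) gives
\[
\varrho^{(n)}_t(x)\ \le\ e^{-\alpha_0}e^{-t}+\varkappa\bigl(1-e^{-t}\bigr)\ \le\ e^{-\alpha_0}.
\]
Induction on $n$ closes the a priori estimate uniformly in $t$.

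Next, pick any $T>0$ with $q(T)<1$; note that $T$ depends only on $\varkappa$ and $\langle\phi\rangle$, not on the initial datum. The contraction estimate (\ref{E17}) then makes $\{\varrho^{(n)}_t\}$ a Cauchy sequence in $L^{\infty}(\mathbb{R}^d)$ uniformly in $t\in[0,T]$; its limit $\varrho_t$ solves (\ref{E15}) and hence is a classical solution of (\ref{E14}) on $[0,T]$, and by the step above lies in $\varDelta^{+}_{\alpha_0}$. Since the step size $T$ is independent of the starting point within $\varDelta^{+}_{\alpha_0}$, restarting the Picard scheme from $\varrho_T\in\varDelta^{+}_{\alpha_0}$ produces a solution on $[T,2T]$, and iterating yields a global classical solution $\varrho_t\in\varDelta^{+}_{\alpha_0}$ for all $t\ge 0$. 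Uniqueness on each step, and hence globally, follows from the same contraction argument applied to a difference of two solutions.

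Finally, with a global $\varrho_t$ in hand, Lemma~\ref{Vlm} identifies the solution of (\ref{E12}) with initial datum $r_0=e(\varrho_0,\cdot)\in\mathcal{K}_{\alpha_0}$ as $r_t=e(\varrho_t,\cdot)$ on the interval $[0,\widetilde{T}(\alpha_0,\alpha))$ provided by Proposition~\ref{Epn}; but $e(\varrho_t,\cdot)$ is defined for every $t\ge 0$, lies in $\mathcal{K}_{\alpha_0}$ by $\varrho_t\in\varDelta^{+}_{\alpha_0}$, and satisfies (\ref{E12}) pointwise (differentiate the product and substitute (\ref{E14})). Concatenating local solutions via Proposition~\ref{Epn} starting from each $r_{kT}$ then gives the global solution (\ref{C23}). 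The main subtlety, rather than a genuine obstacle, is ensuring that the Picard step $T$ is independent of the initial condition so that the continuation does not degenerate after finitely many iterations; the hypothesis (\ref{Ea16}) is precisely what makes both the invariance of $\varDelta^{+}_{\alpha_0}$ and the membership $e(\varrho_t,\cdot)\in\mathcal{K}_{\alpha_0}$ persist across every restart.
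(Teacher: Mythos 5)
Your argument is correct and is essentially the paper's own proof: the same Picard scheme (\ref{E15})--(\ref{E16}) with the inductive invariance of $\varDelta^{+}_{\alpha_0}$ under (\ref{Ea16}), the contraction (\ref{E17}) on a step $[0,T]$ with $q(T)<1$ depending only on $\varkappa\langle\phi\rangle$, continuation by restarting from $\varrho_T\in\varDelta^{+}_{\alpha_0}$, and the identification $r_t=e(\varrho_t,\cdot)$ through Lemma~\ref{Vlm} and Proposition~\ref{Epn}. You merely spell out more explicitly the uniformity of the step size and the concatenation on the hierarchy side, which the paper leaves implicit.
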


\subsection{The scaling limit $\varepsilon \rightarrow 0$}

Our final task in this work is to show that the solution of
(\ref{E2}) $k_t^{(\varepsilon)}$ converges in
$\mathcal{K}_{\alpha_0}$ uniformly in on $[0,T]$, $T < T(\alpha_0,
\alpha)$, to the solution of (\ref{E12}), see Proposition~\ref{Epn}.
Here we should impose an additional condition on the potential
$\phi$, which, however, seems quite natural. Recall that in this
section we suppose $\phi\in L^1 (\mathbb{R}^d)$.
\begin{theorem}
 \label{3tm}
Let $\phi$, $\alpha_0$, $\alpha$, and
$\widetilde{T}(\alpha_0,\alpha)$ be as in Proposition~\ref{Epn}.
Assume also that $\phi \in L^1 (\mathbb{R}^d) \cap L^\infty
(\mathbb{R}^d)$ and consider the problems (\ref{E2}) and (\ref{E12})
with $k^{(\varepsilon)}_{0, {\rm ren}} = r_0 \in
{\mathcal{K}}_{\alpha_0}$. For their solutions
$k^{(\varepsilon)}_{t, {\rm ren}}$ and $r_t$, it follows that
$k^{(\varepsilon)}_{t, {\rm ren}} \rightarrow r_t$ in
${\mathcal{K}}_{\alpha}$, as $\varepsilon \rightarrow 0$, uniformly
on every $[0,T]$, $T < \widetilde{T}(\alpha_0, \alpha)$.
\end{theorem}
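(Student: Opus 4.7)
The plan is to compare the two Picard representations of $k^{(\varepsilon)}_{t,\rm ren}$ and $r_t$. By Proposition~\ref{Epn} (applied with a common $r_0\in\mathcal{K}_{\alpha_0}$), for $t\in[0,\widetilde T(\alpha_0,\alpha))$ both solutions are given by convergent series in $\mathcal{K}_\alpha$,
\[
k^{(\varepsilon)}_{t,\rm ren}=\sum_{m=0}^{\infty}\frac{t^{m}}{m!}(L_{\varepsilon,\rm ren})^{m}r_{0},\qquad r_{t}=\sum_{m=0}^{\infty}\frac{t^{m}}{m!}(L_{V})^{m}r_{0},
\]
and the telescoping identity
\[
(L_{\varepsilon,\rm ren})^{m}-(L_{V})^{m}=\sum_{j=0}^{m-1}(L_{\varepsilon,\rm ren})^{m-1-j}\bigl(L_{\varepsilon,\rm ren}-L_{V}\bigr)(L_{V})^{j}
\]
expresses $k^{(\varepsilon)}_{t,\rm ren}-r_{t}$ as a double sum in which every summand is a composition of bounded maps between adjacent spaces of the scale $\{\mathcal{K}_{\beta}\}$. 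The task is then to show that each such composition carries a factor $\varepsilon$, with summable majorants on $[0,T]$.

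The central technical step is the estimate
\[
\|(L_{\varepsilon,\rm ren}-L_{V})k\|_{\alpha'}\leq\frac{\varepsilon\,C_{\phi}}{(\alpha''-\alpha')\,e}\,\|k\|_{\alpha''},\qquad \alpha\leq\alpha'<\alpha''\leq\alpha_{0},
\]
with $C_{\phi}$ depending on $\varkappa$, $\alpha_{0}$, $\|\phi\|_{\infty}$ and $\langle\phi\rangle$. To obtain it I would split
\begin{align*}
&e(\tau^{(\varepsilon)}_{x},\eta\setminus x)\,e(\varepsilon^{-1}t^{(\varepsilon)}_{x},\xi)-e(-\phi(x-\cdot),\xi)\\
&\quad=\bigl[e(\tau^{(\varepsilon)}_{x},\eta\setminus x)-1\bigr]e(\varepsilon^{-1}t^{(\varepsilon)}_{x},\xi)+\bigl[e(\varepsilon^{-1}t^{(\varepsilon)}_{x},\xi)-e(-\phi(x-\cdot),\xi)\bigr],
\end{align*}
and use the elementary pointwise bounds $|1-e^{-\varepsilon a}|\leq\varepsilon a$ and $|\varepsilon^{-1}(1-e^{-\varepsilon a})-a|\leq\tfrac{\varepsilon}{2}a^{2}$ for $a\geq 0$, combined with the product-telescoping inequality $|\prod a_{i}-\prod b_{i}|\leq\sum_{i}|a_{i}-b_{i}|\prod_{j\neq i}\max(|a_{j}|,|b_{j}|)$. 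This yields the pointwise majorant $\varepsilon\|\phi\|_{\infty}\bigl(|\eta|+\tfrac12|\xi|\bigr)\,e(\phi(x-\cdot),\xi)$, where the hypothesis $\phi\in L^{\infty}$ is used to dominate the quadratic term $\phi(x-y)^{2}$ by $\|\phi\|_{\infty}\phi(x-y)$. Integration against $e^{-\alpha'|\eta|}\lambda(d\eta)$ via~(\ref{5A}) and~(\ref{12A}), with $\int_{\Gamma_{0}}e(e^{-\alpha''}\phi(x-\cdot),\xi)\lambda(d\xi)=\exp(e^{-\alpha''}\langle\phi\rangle)$ and the familiar estimate $|\eta|e^{-(\alpha''-\alpha')|\eta|}\leq[(\alpha''-\alpha')e]^{-1}$ from Section~3, delivers the claimed bound.

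Given this estimate, the conclusion follows from an Ovsjannikov-type summation: for each term in the telescoping double sum, partition $[\alpha,\alpha_{0}]$ into $m$ equal subintervals and chain the bounds~(\ref{E9}), (\ref{E11}) and the one just derived through the $m$ intermediate levels, arriving at a majorant of order $\varepsilon\,(m/e)^{m}\,[\widetilde T(\alpha_{0},\alpha)]^{-m}$ per telescoping term; combining with the combinatorial factor $t^{m}/m!$ and the inner sum of $m$ terms produces a series of type $\varepsilon\,C'\sum_{m\geq 1}m\bigl(t/\widetilde T(\alpha_{0},\alpha)\bigr)^{m}$, which converges uniformly on $[0,T]$ whenever $T<\widetilde T(\alpha_{0},\alpha)$ and yields $\|k^{(\varepsilon)}_{t,\rm ren}-r_{t}\|_{\alpha}=O(\varepsilon)$. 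The main obstacle is the first step: obtaining the comparison bound with a genuine $\varepsilon$ pre-factor, because one must keep the quadratic remainder $\phi(x-y)^{2}$ under control inside a product whose number of factors is unbounded; this is exactly why the extra assumption $\phi\in L^{\infty}$ (absent in Theorem~\ref{otm} and Proposition~\ref{Epn}) is needed here.
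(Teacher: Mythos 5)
Your argument is essentially the paper's own proof: the same telescoping identity for $L_{\varepsilon,\rm ren}^m-L_V^m$, the same splitting of the kernel difference into $\bigl[e(\tau^{(\varepsilon)}_x,\eta\setminus x)-1\bigr]e(\varepsilon^{-1}t^{(\varepsilon)}_x,\xi)$ plus $e(\varepsilon^{-1}t^{(\varepsilon)}_x,\xi)-e(-\phi(x-\cdot),\xi)$, the same elementary bounds $|1-e^{-\varepsilon a}|\le \varepsilon a$ and $|\varepsilon^{-1}(1-e^{-\varepsilon a})-a|\le\tfrac{\varepsilon}{2}a^{2}$ together with the product-telescoping inequality, and the same use of $\phi\in L^{\infty}$, yielding $\|L_{\varepsilon,\rm ren}-L_V\|=O(\varepsilon)$; the only organizational difference is that the paper first truncates both Picard series at a level $n$ chosen uniformly in $\varepsilon$ (possible by (\ref{E9})) and then estimates the finite sum, whereas you chain the full series through the scale, which amounts to the same thing. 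One small correction: since the factor $1-e(\tau^{(\varepsilon)}_x,\eta\setminus x)$ contributes an extra $|\eta|$ on top of the $|\eta|$ coming from $\sum_{x\in\eta}$, your one-step bound must also contain a term of order $(\alpha''-\alpha')^{-2}$, exactly as in the paper's estimate (\ref{Vx2}); this only introduces polynomial-in-$m$ factors into your majorant series and does not affect its convergence for $t<\widetilde{T}(\alpha_0,\alpha)$.
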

\begin{proof}
Given $n\in \mathbb{N}$, let $k^{(\varepsilon)}_{t,n}$ and $r_{t,n}$
be defined as in (\ref{3D}) with $L_{\varepsilon, {\rm ren}}$ and
$L_V$, respectively. Like in the proof of Theorem~\ref{otm}, one can
show that the sequences of $k^{(\varepsilon)}_{t,n}$ and $r_{t,n}$
converge in $\mathcal{K}_{\alpha}$ to $k^{(\varepsilon)}_{t}$ and
$r_{t}$, respectively, uniformly on every $[0,T]$, $T < T(\alpha_0,
\alpha)$. Then, for $\delta >0$, one finds $n\in \mathbb{N}$ such
that, for all $t\in[0,T]$,
\begin{equation}
 \label{U}
\|k^{(\varepsilon)}_{t,n} - k^{(\varepsilon)}_{t,{\rm ren}}\|_{\alpha} + \| r_{t,n} - r_t\|_{\alpha} < \delta/2.
\end{equation}
Then in view if (\ref{3D}),
\begin{align}
 \label{U1} \qquad
 \|k^{(\varepsilon)}_{t,{\rm ren}} - r_t\|_{\alpha} & \leq  \bigg{\|}\sum_{m=1}^n
\frac{1}{m!} t^m \left(L_{\varepsilon, {\rm ren}}^m - L^m_V \right)r_0\bigg{\|}_{\alpha}
 + \frac{\delta}{2} \\  & \leq
\|L_{\varepsilon, {\rm ren}} - L_V\|_{\alpha_0 \alpha} \|r_0\|_{\alpha_0} T \exp \left(T b(\alpha_0, \alpha)\right)
 + \frac{\delta}{2}, \nonumber
\end{align}
where, see (\ref{o7}),
\[
b(\alpha_0 , \alpha) := \frac{1}{(\alpha_0 -\alpha)e}
 \left[1 + \varkappa \exp\left(\alpha_0+ \langle \phi \rangle e^{-\alpha} \right) \right].
\]
Here we used the following representation
\begin{align}
 \label{U2}
  L_{\varepsilon, {\rm ren}}^m - L^m_V = & \left(L_{\varepsilon,
{\rm ren}} - L_V \right) L_{\varepsilon, {\rm ren}}^{m-1} + L_V
\left(L_{\varepsilon, {\rm ren}} - L_V \right) L_{\varepsilon, {\rm
ren}}^{m-2} \\
 &  + \cdots + L_V^{m-2} \left(L_{\varepsilon, {\rm ren}} - L_V \right)
 L_{\varepsilon, {\rm ren}} + L_V^{m-1} \left(L_{\varepsilon, {\rm ren}} - L_V \right) .\nonumber
\end{align}
Thus, we have to show that
\begin{equation}\label{con}
 \|L_{\varepsilon, {\rm ren}}
-L_V\|_{\alpha_0\alpha}\to 0, \quad \ {\rm as} \ \ \varepsilon\to 0,
\end{equation}
which will allow us to make the first summand in the right-hand side of (\ref{U1}) also smaller than $\delta/2$ and thereby to complete the proof.

Subtracting (\ref{E10}) from (\ref{E4}) we get
\begin{equation}
 \label{Vx}
 \left(L_{\varepsilon, {\rm ren}}
-L_V \right)k(\eta) = \varkappa \sum_{x\in \eta} \int_{\Gamma_0}
Q_\varepsilon (x, \eta\setminus x, \xi) k(\eta \setminus x \cup \xi)
\lambda (d\xi) \end{equation} where
\begin{align}
 \label{Vx1}
Q_\varepsilon (x, \eta\setminus x, \xi)  :&=
e(\tau_x^{(\varepsilon)}, \eta \setminus x)
e(\varepsilon^{-1}t^{(\varepsilon)}_x,\xi)
 - e (- \phi(x - \cdot), \xi)\\
 &= e(\varepsilon^{-1}t^{(\varepsilon)}_x,\xi) - e (- \phi(x - \cdot)-\left[ 1 - e(\tau_x^{(\varepsilon)}, \eta \setminus x) \right] e(\varepsilon^{-1}t^{(\varepsilon)}_x,\xi). \nonumber
\end{align}
For $t>0$, the function $e^{-t} - 1 + t$ takes positive values only;
hence
\[
\Psi(t) : = (e^{-t} - 1 + t)/t^2, \quad t>0,
\]
is positive and bounded, say by $C>0$.
Then by means of the following elementary analog of (\ref{U2})
\[
b_1 \cdots b_n - a_1 \cdots a_n \leq \sum_{i=1}^n (b_i - a_i) b_1
\cdots b_{i-1} b_{i+1} \cdots b_n, \quad b_i \geq a_i >0,
\]
we obtain
\begin{align*}
 \left\vert e(\varepsilon^{-1}t^{(\varepsilon)}_x,\xi) - e (- \phi(x - \cdot) \right\vert &\leq \sum_{y\in \xi} \varepsilon [\phi(x-y)]^2
\Psi\left(\varepsilon \phi(x-y)\right)  \prod_{z\in \xi\setminus y}
\phi(x-z) \\   &\leq\varepsilon C \sum_{y\in \xi}
 [\phi(x-y)]^2 e(\phi(x-\cdot), \xi\setminus y),
\end{align*}
and
\begin{align*}
 \left\vert \left[ 1 - e(\tau_x^{(\varepsilon)}, \eta \setminus x) \right] e(\varepsilon^{-1}t^{(\varepsilon)}_x,\xi) \right\vert \leq \varepsilon
 \sum_{y\in \eta\setminus x} \phi(x-y) e(\phi(x-\cdot), \xi).
\end{align*}
Then from (\ref{Vx}) for $\lambda$-almost all $\eta$ we have, see (\ref{o5}),
\begin{align*}
 &\left\vert ( L_{\varepsilon, {\rm ren}}
-L_V) k(\eta)\right\vert \\ \leq &\, \varkappa \|k\|_{\alpha_0}
e^{-\alpha_0 |\eta|} \sum_{x\in \eta} \int_{\Gamma_0} \exp( -
\alpha_0 |\xi| + \alpha_0)\\  & \quad\times \bigg{\{} \varepsilon C
\sum_{y \in \xi} [\phi(x-y)]^2 e(\phi(x-\cdot), \xi\setminus y) +
\varepsilon \sum_{y\in \eta\setminus x}\phi(x-y)
 e(\phi(x-\cdot), \xi) \bigg{\}} \lambda (d \xi) \displaybreak[2]\\   \leq &\,
\varkappa \varepsilon \|k\|_{\alpha_0} e^{-\alpha_0 |\eta|}
\sum_{x\in \eta} \int_{\Gamma_0} e^{-\alpha_0 |\xi|}e(\phi(x-\cdot),
\xi) \\
& \quad \times \bigg{\{} C \int_{\mathbb{R}^d} [\phi(x-y)]^2 dy +
e^{\alpha_0}
 \sum_{y\in \eta\setminus x} \phi(x-y) \bigg{\}}\lambda (d \xi)
\\ \leq &\, \varkappa \varepsilon \|k\|_{\alpha_0}
\|\phi\|_{L^\infty (\mathbb{R}^d)}
 \exp\left( \langle \phi \rangle e^{-\alpha_0}\right)\left[C \langle \phi \rangle
 |\eta| + e^{\alpha_0} |\eta| ( |\eta|-1) \right]e^{-\alpha_0 |\eta|}.
\end{align*}
This yields
\begin{align}
 \label{Vx2}
 \|L_{\varepsilon, {\rm ren}}
-L_V\|_{\alpha_0\alpha} \leq &\, \varepsilon \varkappa
\|\phi\|_{L^\infty(\mathbb{R}^d)}
 \exp\left( \langle \phi \rangle e^{-\alpha_0}\right) \\
&\times  \left[\frac{C \langle \phi \rangle }{( \alpha_0 - \alpha)e}
+ \frac{4 e^{\alpha_0}} {( ( \alpha_0 - \alpha)e)^2} \right],
\nonumber
\end{align}
and thereby (\ref{con}).
\end{proof}

\section{Concluding remarks}

Regarding the evolution of quasi-observables, in Theorem~\ref{1tm}
we have proven its existence in $\mathcal{G}_\alpha$ if $G_0 \in
\mathcal{G}_{\alpha_0}$, for any $\alpha_0$ and any
$\alpha>\alpha_0$, and
 for all values
of the model parameters $c_\phi$ and $\varkappa$, however, on a
bounded time interval. Note that the bound $T(\alpha, \alpha_0)$ is
small for big $c_\phi \varkappa$, see (\ref{7R}). Note also that
there exists the scale of spaces $\mathcal{G}_{\alpha_t} \subset
\mathcal{G}_\alpha$ such that $G_t \in \mathcal{G}_{\alpha_t}$ for
$t\in [0,T(\alpha, \alpha_0))$, similarly to Theorem~\ref{otm}. For
$c_\phi \varkappa<1/e$, the evolution $G_0 \mapsto G_t$ is described
by a $C_0$-semigroup, and hence has no time bounds, see
Theorem~\ref{20tm}.

Turn now to the evolution of states and correlation functions. The
main peculiarity of Theorem~\ref{otm} is that, in contrast to the
results of~\cite{Oles1,Oles,OlesLena}, here we (a) impose no
restrictions on $c_\phi$ and $\varkappa$; (b) describe the evolution
directly, not as a weak evolution via (\ref{1D}). The price is the
time restriction, similar as in Theorem~\ref{1tm}. Again, we can
start in $\mathcal{K}_{\alpha_0}$ with any $\alpha_0\in \mathbb{R}$,
and obtain that $k_t \in \mathcal{K}_{\alpha_t} \subset
\mathcal{K}_\alpha$, also for any $\alpha < \alpha_0$. The time
bound $T(\alpha_0, \alpha)$ depends on the choice of $\alpha_0$ and
$\alpha$. If the initial states is dominated by the Poisson measure
with intensity $\varkappa$, that is, if $k_0 (\eta) \leq
\varkappa^{|\eta|}$, then the solution described by
Theorem~\ref{otm} has also the property $k_t (\eta) \leq
\varkappa^{|\eta|}$, and hence can be continued in time ad
infinitum, see Theorem~\ref{2tm}. Of course, in this case $\alpha_0$
should obey (\ref{Ea16}). The main aim of using the Vlasov hierarchy
(\ref{E12}) is obtaining the scaling limit of the rescaled
correlation functions $k^{(\varepsilon)}_{t, {\rm ren}}$. For any
$\alpha_0\in \mathbb{R}$ and $r_0 \in \mathcal{K}_{\alpha_0}$, this
hierarchy has a unique classical solution $r_t$ in any
$\mathcal{K}_\alpha$, $\alpha < \alpha_0$, with $t\in [0,
\widetilde{T}(\alpha_0, \alpha))$, see Proposition~\ref{Epn}. Here,
however, for general $r_0$ we have no tools for continuing $r_t$,
like we did in Theorem~\ref{2tm} where we used the connection of
$L^\Delta$ with $L$ given by (\ref{R20}), since neither Markov
operator corresponds to $L_V$. But if $r_0$ is Poissonian, i.e.,
$r_0= e(\varrho_0, \cdot)$, then (\ref{E12}) has the solution $r_t=
e(\varrho_t, \cdot)$ with infinite time lives in `sufficiently
large' $\mathcal{K}_{\alpha_0}$, see Theorem~\ref{Vatm}. The latter
means that the Poissonian correlation function $k(\eta) =
\varkappa^{|\eta|}$ belongs to this $\mathcal{K}_{\alpha_0}$, see
(\ref{Ea16}).

Note also that in the recent paper \cite{FKK} it was shown the
existence and strong convergence in the Vlasov scaling for the
classical solution in one space $\mathcal{K}_\alpha$ but again under
the condition $c_\phi \varkappa<1/e$.

\vskip.5cm

\noindent {\it Acknowledgment:} The authors are grateful to Oles
Kutovyi for valuable discussions.

\bibliographystyle{amsplain}

\end{document}